\newtheoremstyle{italicheader} 
  {0pt} 
  {0pt} 
  {} 
  {\parindent} 
  {\itshape} 
  {.} 
  { } 
  {\itshape #1~#2} 
\renewenvironment{proof} 
{\par\noindent\hspace{\parindent}\textit{Proof.} } 
{\qed} 
\theoremstyle{italicheader}
\newtheorem{theorem}{Theorem}
\newtheorem{proposition}{Proposition}
\newcommand{\st}[1]{{\color{blue} ST: \sl{#1}}}
\begin{document}
\renewcommand{\arraystretch}{1.3}


\title{Memory cost of quantum contextuality with Pauli observables}


\author{Stefan Trandafir}
\email{strandafir@us.es}
\affiliation{Departamento de F\'{\i}sica Aplicada II, Universidad de Sevilla, E-41012 Sevilla, Spain}

\author{Colm Kelleher}
\email{colm.kelleher@utbm.fr}
\affiliation{Université Marie et Louis Pasteur, UTBM, CNRS, Laboratoire Interdisciplinaire Carnot de Bourgogne ICB UMR 6303, 90010 Belfort, France}
\affiliation{Université Bourgogne Europe, CNRS, Laboratoire Interdisciplinaire Carnot de Bourgogne ICB UMR 6303, 21000 Dijon, France}

\author{Ad\'an~Cabello}
\email{adan@us.es}
\affiliation{Departamento de F\'{\i}sica Aplicada II, Universidad de Sevilla, E-41012 Sevilla,
Spain}
\affiliation{Instituto Carlos~I de F\'{\i}sica Te\'orica y Computacional, Universidad de
Sevilla, E-41012 Sevilla, Spain}


\begin{abstract}
Classically simulating the quantum contextual correlations produced by sequences of ideal measurements of compatible observables requires the measured system to have an internal memory. Computing the minimum amount of memory needed is, in general, challenging. Here, building upon the work of Kleinmann {\em et al.} [New J. Phys. 13, 113011 (2011)], we prove that the memory cost for simulating the contextuality produced by the $10$ three-qubit observables of Mermin's pentagram is only $\log_2(5) \approx 2.32$ bits, but the memory cost for simulating the contextuality produced by all $15$ two-qubit Pauli observables is, at least, $\log_2(6) \approx 2.58$ bits, thus exceeding the classical capacity of the system on which the measurements are performed. We also add results on the memory for simulating some subsets of quantum predictions.
\end{abstract}


\maketitle


\section{Introduction}


Minimally disturbing (or ideal) measurements of an observable are those that do not disturb any jointly measurable observable \cite{Cabello:2019PRA,Chiribella:2020PRR}. Simulating the quantum predictions for sequential minimally disturbing measurements of jointly measurable observables randomly chosen from a finite set requires the measured system to have an internal classical memory \cite{Cabello:2012FP,Cabello:2011UC,Kleinmann:2011NJP,Horodecki:2023}. 
Computing the minimum amount of memory needed to reproduce quantum contextual correlations is, in general, challenging. Kleinmann {\em et al.} \cite{Kleinmann:2011NJP} showed that classically simulating the quantum contextual correlations produced by sequences of ideal measurements taken from the two-qubit observables of the Peres-Mermin square \cite{Peres:1990PLA,Mermin:1990PRLb,Mermin:1993RMP} requires two bits, but simulating the quantum contextual correlations produced by the 15 two-qubit Pauli
observables requires more than two bits. This is interesting as this is the first case in which the memory needed exceeds the classical capacity of the quantum system on which the measurements are performed, which is upper bounded by the Holevo capacity \cite{Holevo:1982}. However, all we know is that two bits are not enough.


\subsection{Results}


Here, building on the work of Kleinmann {\em et al.} \cite{Kleinmann:2011NJP}, we prove the following main results:

{\em Result 1.} The memory cost for simulating the contextuality produced by the three-qubit observables of Mermin's pentagram \cite{Mermin:1990PRLb,Mermin:1993RMP} is $\log_2(5) \approx 2.32$ bits.

{\em Result 2.} The memory cost for simulating the contextuality produced by all $15$ two-qubit Pauli observables is, at least, $\log_2(6) \approx 2.58$ bits.

In addition, we prove some other results about the memory cost of partially simulating contextuality.


\subsection{Structure}


In Sec.~\ref{sec:nat}, we explain the motivation behind the study of contextuality for minimally disturbing measurements. In Sec.~\ref{sec:corr}, we describe contextuality experiments with sequential measurements. In Sec.~\ref{sec:sim}, we explain why an internal memory is needed. In Sec.~\ref{stateoftheart}, we detail the three cases of quantum state-independent contextuality that we study in this work. We also summarize what is known about the memory cost of simulating them. In Sec.~\ref{Mealy}, we introduce the tools used in \cite{Kleinmann:2011NJP} to compute the memory cost of contextuality.

In Sec.~\ref{sec:methodology}, we introduce the methods used in the proofs of our results. These proofs are presented in Sec.~\ref{sec:proofs}. Finally, in Sec.~\ref{conc} we summarize our results, connect our approach to other approaches in the literature, and list the still open problems.


\subsection{Contextuality as a generalization of nonlocality}
\label{sec:nat}


Quantum contextuality \cite{Budroni:2022RMP} is a natural generalization of Bell nonlocality \cite{Brunner:2014RMP} inspired by two observations: (i) Given a finite set of observables, the existence of nonclassical correlations require the observables to satisfy specific relations of incompatibility (i.e., lack of joint measurability). Otherwise, Vorob'ev's theorem \cite{Vorobev1959,Vorobev1962,Vorobev1967coal} prohibits them. (ii) Most of the relations of incompatibility for which Vorob'ev's theorem does not forbid nonclassical correlations {\em cannot} be realized in Bell experiments \cite{Bell:1964PHY} (because there is no way to distribute the observables among spatially separated parties respecting the relations of incompatibility). However, all of them can be realized as experiments with sequential ideal (i.e., minimally disturbing) 
measurements (i.e., measurements that do not disturb any jointly measurable observable) \cite{Cabello:2008PRL,Cabello:2019PRA}.

Vorob'ev's theorem is a result in probability theory and statistics found in the context of the study of coalition games \cite{Vorobev1959,Vorobev1967coal}. 
It addresses the following question: Given a collection of probability distributions for different subsets of variables (e.g., observables), can we find a single joint probability distribution for all the variables that is consistent with the given marginals? Vorob'ev's theorem gives necessary and sufficient conditions for such a joint distribution to exist. In the case the variables represent ideal measurements of observables, this necessary and sufficient condition is that the graph in which vertices represent the observables and edges connect jointly measurable observables lacks induced cycles of size four or larger \cite{Xu:2019PRA}. 

The quantum violation of the Clauser-Horne-Shimony-Holt Bell inequality \cite{Clauser:1969PRL} corresponds to the case of a cycle of size four (a square), in which the two observables each party can measure are not jointly measurable. The quantum violation of the Klyachko-Can-Binicio\u{g}lu-Shumovsky (KCBS) noncontextuality inequality \cite{Klyachko:2008PRL} corresponds to the case of a cycle of size five (a pentagon). In the case of a pentagon, there is no way to distribute the five observables among (any number of) spacelike separated parties while respecting the exact relations of joint measurability described by the pentagon. However, the correlation violating the KCBS can be produced in experiments in which each pair of jointly measurable observables is measured by first measuring one of them, and then the other. 
These sequential experiments have been faithfully implemented on single photons \cite{Ahrens:2013SR,Marques:2014PRL} and single ions \cite{Malinowski:2018PRA,Hu:2023NPJ}. More complex structures of joint measurability allow for more interesting forms of quantum contextuality, e.g., state-independent contextuality \cite{Cabello:2008PRL,Yu:2012PRL}, which can also be experimentally tested in sequential experiments \cite{Kirchmair:2009NAT,Amselem:2009PRL,Leupold:2018PRL}.


\subsection{Contextuality experiments} 
\label{sec:corr}


Contextuality experiments require to previously assure that the implementation satisfies the assumed relations of joint measurability and the ideality of the measurements. The sequences of measurements necessary to obtain the value of the contextuality witness (i.e., the value of the violation of the noncontextuality inequality) can only be carried out when it has been successfully verified that the setup meets the requirements (see, e.g., \cite{Leupold:2018PRL} to see how this is accomplished). However, the aim of both Bell and contextuality experiments is generating a {\em correlation}, defined as a set of probabilities of the form $p(a_1,\ldots,a_n|x_1,\ldots,x_n)$, where $x_1,\ldots,x_n$ is each one of the possible sets of jointly measurable observables (each of these sets is called a {\em context}) and $a_1,\ldots,a_n$ one of the possible combination of results.


\subsection{Resources for classically simulating contextuality}
\label{sec:sim}


Assuming that the measurement choices are independent of the hidden variables, the predictions of quantum theory for Bell experiments can be simulated by allowing a certain amount of superluminal communication  \cite{Toner:2003PRL,Pironio:2003PRA}. Similarly, the predictions of quantum theory for contextuality experiments with sequential measurements can be simulated by allowing the measured system to have a certain amount of hidden {\em memory} \cite{Cabello:2012FP,Cabello:2011UC,Kleinmann:2011NJP,Horodecki:2023}. This memory should be located in the measured system (rather than in the measuring devices). The reason is that the predictions of quantum theory hold even if we use a new measuring device every time we perform a measurement. The problem is that obtaining this minimum memory (i.e., the ``memory cost of quantum contextuality'') has been found very difficult except for simple cases.


\subsection{Simulation of state-independent contextuality produced by Pauli measurements}
\label{stateoftheart}


Here, we first present the arguably most fundamental examples of quantum state-independent contextuality for qubits. Then, we review the tools used to address the problem of the memory cost and what is known about the memory cost of simulating them.


\subsubsection{Case 1: Contextuality produced by the Peres-Mermin square}


Consider $9$ observables: $A$, $B$, $C$, $a$, $b$, $c$, $\alpha$, $\beta$, and $\gamma$. Suppose that the possible results of measuring each of them are $+1$
and $-1$. Noncontextual hidden-variable models assume that measurements reveal predetermined results that are independent of which other jointly measurable observables are measured.
For any noncontextual hidden-variable model the following inequality holds \cite{Cabello:2008PRL}: 
\begin{equation} 
    \langle ABC \rangle + \langle abc \rangle +
    \langle \alpha \beta \gamma \rangle + \langle A a \alpha \rangle + \langle B b \beta \rangle - \langle C c \gamma \rangle \le 4,
\label{nci1}
\end{equation}
where $\langle ABC \rangle$ is the mean value of the product of the results of measuring 
$A$, $B$, and $C$ (which are three jointly measurable observables) on the same system. 
Inequality \eqref{nci1} is a {\em noncontextuality inequality}. 

If we consider a two-qubit system and choose the observables as follows:
\begin{equation} \label{pms}
    \begin{bmatrix}
         A & B & C \\
        a & b & c \\
        \alpha & \beta & \gamma \\
    \end{bmatrix}
    = 
    \begin{bmatrix}
        \sigma_z \otimes \mathds{1}  & \mathds{1} \otimes \sigma_z & \sigma_z \otimes \sigma_z \\
        \mathds{1} \otimes \sigma_x & \sigma_x \otimes \mathds{1}& \sigma_x \otimes \sigma_x \\
        \sigma_z \otimes \sigma_x & \sigma_x \otimes \sigma_z & \sigma_y \otimes \sigma_y
    \end{bmatrix},
\end{equation}
then, {\em any} two-qubit state violates inequality \eqref{nci1}, since the left-hand side of \eqref{nci1} becomes $6$ [see prediction (II) in Sec.~\ref{qps}]. 
The violation predicted by quantum theory has been verified in experiments with sequential measurements on individual ions \cite{Kirchmair:2009NAT} and photons \cite{Amselem:2009PRL}.

The right-hand side of Eq.~\eqref{pms} is the so-called Peres-Mermin (or ``magic'') square \cite{Peres:1990PLA,Mermin:1990PRLb,Mermin:1993RMP}, which is a set of quantum observables used in a proof of the Kochen-Specker theorem \cite{Kochen:1967JMM}.


\subsubsection{The quantum predictions we want to simulate}
\label{qps}


Our main objective is to simulate the \emph{deterministic} predictions made by quantum theory. For example, if one measures the observable $A$ of the Peres-Mermin square twice in succession, to obtain measurement outcomes $m_1,m_2$, then it must be the case that $m_1 = m_2.$

Moreover, in this article, our objective is only to simulate a subset of the deterministic predictions made by quantum theory. In particular, we are interested in simulating the following:

(Ia) Any measurement, when repeated after measuring only observables within a single context, yields the same result. For example, if we measure $A$ followed by $B$ and then we measure $A$ again within the context $\{A,B,C\}$, and obtain, respectively, results $m_1$, $m_2$, and $m_3$, then it must occur that $m_1 = m_3$.

(Ib) Any measurement $M$, when repeated after measuring only observables that are jointly measurable with $M$, yields the same result. For example, if we measure $A$, then $B$, then $a$, and then $A$ again, and obtain respective results $m_1$, $m_2$, $m_3$, and $m_4$, then, it must occur that $m_1=m_4$, even though $B$ and $a$ are not jointly measurable.

(II) The product of the results of measuring (in any order) the three observables in each context (i.e., in each row or column of Fig.~\ref{fig:square-magicset}) must be $+1$, except for the last column, in which case the product of the results must be $-1$. This follows from the fact that the product of the self-adjoint operators that represent these observables in quantum theory is the identity (in dimension four) or minus the identity, respectively. For example, if we measure $C$ followed by $c$ and then $\gamma$, and obtain outcomes $m_1$, $m_2$, and $m_3$, respectively, then it must occur that $m_1 m_2 m_3 = -1$.

The six sets of jointly measurable observables and the indication of whether their product is the identity (in dimension four) or minus the identity is illustrated in Fig.~\ref{fig:square-magicset}.

Prediction (Ia) justifies the assumption of noncontextual hidden-variables, and prediction (Ib) represents a strengthened version [that is (Ib) implies (Ia)]. Prediction (II) leads to the violation of inequality \eqref{nci1}.

Throughout this article, we construct models to simulate predictions (Ia) and (II) [without I(b)], and also models to simulate predictions (Ia), (Ib), and (II). Our justification for these choices (as opposed to models to simulate all deterministic predictions) are as follows. Firstly, the predictions we consider cover the predictions of quantum theory associated to \emph{contextuality} (which is the phenomenon that we are interested in simulating). Secondly, the memory cost of simulating these subsets of deterministic predictions seems significantly lower than that of simulating all deterministic predictions. It is useful to understand this gap. Finally, these subsets of predictions were studied in \cite{Kleinmann:2011NJP} where exact values of the memory cost were obtained for the Peres-Mermin square. On the other hand, the exact memory cost of simulating all deterministic predictions is not known in \emph{any} case. Since very few exact results are known for the memory cost of simulating deterministic predictions, our aim was to begin to remedy this. 


\begin{figure}
\centering
\includegraphics[width=.7\linewidth]{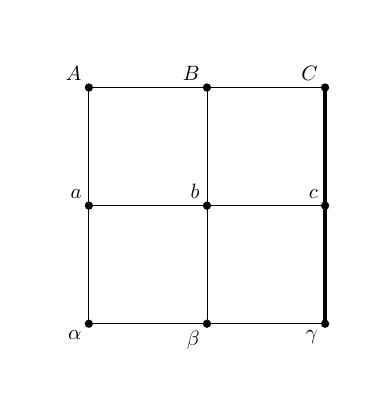}
\caption{The Peres-Mermin square. Each observable is indicated by a dot. 
Observables in the same row or column are jointly measurable. According to quantum theory, the product of the results of measuring (in any order) the three observables in each row or column is $+1$, except for the right-most column, in which case the product is $-1$. This is indicated by a thicker line.}
\label{fig:square-magicset}
\end{figure}


\subsubsection{Case 2: Contextuality produced by the 15 two-qubit Pauli observables}


Consider $15$ observables $\chi_{kl}$, where $k,l \in \{0,1,2,3\}$ and $\chi_{00}$ is excluded. Suppose that the possible results of measuring each of them are $+1$ and $-1$. For any noncontextual hidden-variable model the following inequality holds 
\cite{Cabello:2010PRA}:
\begin{equation} \label{rio}
    \begin{split}
        \sum_{k,l} \langle \chi_{k0}\chi_{kl}\chi_{0l} \rangle + \langle \chi_{11} \chi_{23} \chi_{32} \rangle + \langle \chi_{12} \chi_{21} \chi_{33} \rangle + \langle \chi_{13} \chi_{22} \chi_{31}\rangle \\ - \langle \chi_{11} \chi_{22} \chi_{33} \rangle - \langle \chi_{12} \chi_{23} \chi_{31} \rangle
- \langle \chi_{13} \chi_{21} \chi_{32}\rangle \le 9.
    \end{split}
\end{equation}

If we consider a two-qubit system and choose the observ-ables as follows:
\begin{equation} \label{epm}
    \begin{bmatrix}
         & \chi_{01} & \chi_{02} & \chi_{03} \\
        \chi_{10} & \chi_{11} & \chi_{12} & \chi_{13} \\
        \chi_{20} & \chi_{21} & \chi_{22} & \chi_{23} \\
        \chi_{30} & \chi_{31} & \chi_{32} & \chi_{33} \\
    \end{bmatrix}
    = 
    \begin{bmatrix}
        & \mathds{1} \otimes \sigma_x & \mathds{1} \otimes \sigma_y  & \mathds{1} \otimes \sigma_z \\
        \sigma_x \otimes \mathds{1} & \sigma_x \otimes \sigma_x & \sigma_x \otimes \sigma_y & \sigma_x \otimes \sigma_z \\
        \sigma_y \otimes \mathds{1} & \sigma_y \otimes \sigma_x & \sigma_y \otimes \sigma_y & \sigma_y \otimes \sigma_z \\
        \sigma_z \otimes \mathds{1} & \sigma_z \otimes \sigma_x & \sigma_z \otimes \sigma_y & \sigma_z \otimes \sigma_z \\
    \end{bmatrix},
\end{equation}
then, {\em any} two-qubit state violates inequality \eqref{rio}, since the left-hand side of \eqref{rio} becomes $15$.

The right-hand side of Eq.~\eqref{epm} is the set of all $15$ two-qubit Pauli observables (without the identity $\mathds{1} \otimes \mathds{1}$), which is sometimes called the extended Peres-Mermin set \cite{Cabello:2010PRA,Kleinmann:2011NJP}. The $15$ observables in it may be grouped into $15$ different subsets of three mutually jointly measurable observables represented by mutually commuting operators whose product is the identity (in dimension four) or minus the identity, as indicated in Fig.~\ref{fig:extendedPM-magicset}.


\begin{figure}
    \centering
\includegraphics[width=1.0\linewidth]{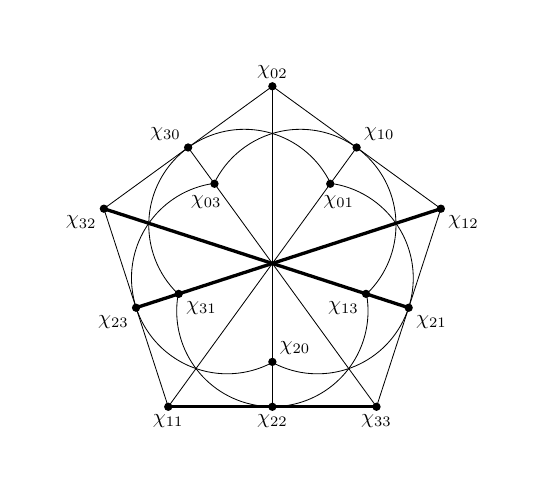}
    \caption{The extended Peres-Mermin set. Each observable is represented by a dot. Observables in the same line are jointly measurable. The product of the results of measuring the three observables in each of the three thicker lines is $-1$. For the other $12$ lines the product of the three results is $+1$. This way of presenting the observables is sometimes called the \emph{doily} \cite{PayneThas:2009}.}
    \label{fig:extendedPM-magicset}
\end{figure}


\subsubsection{Case 3: Contextuality produced by Mermin's pentagram}


Consider $10$ observables: $A$, $B$, $C$, $D$, $a_b$, $a_c$, $a_d$, $b_c$, $b_d$, and $c_d$. The possible results for measuring each of them are $+1$ and $-1$. For any noncontextual hidden-variable model, the following inequality holds \cite{Cabello:2008PRL}:
\begin{equation} \label{sta}
\begin{split}
    \langle A\, a_b\, a_c\, a_d \rangle +
    \langle a_b\, B\, b_c\, b_d \rangle +
    \langle a_c\, b_c\,C\, c_d \rangle  +
    \langle a_d\, b_d\, c_d\, D \rangle \\ -
    \langle A\, B\, C\, D \rangle \le 3.
\end{split}
\end{equation}

If we consider a three-qubit system and choose the observables as follows:
\begin{widetext}
\begin{equation} \label{mpt2}
\begin{bmatrix}
  &   & b_c &   & \\
A\; & \;B &     & C\; & \;D \\
  & a_b\; & & \;c_d & \\
    & & a_d & & \\
\;\;b_d &   & &     & a_c\;\;
     \end{bmatrix}
     =
\begin{bmatrix}
  &   & \mathds{1} \otimes \mathds{1} \otimes \sigma_x &   & \\
\sigma_x \otimes \sigma_x \otimes \sigma_z & \;\;\;\sigma_x \otimes \sigma_z \otimes \sigma_x &     & \sigma_z \otimes \sigma_x \otimes \sigma_x\;\;\; & \sigma_z \otimes \sigma_z \otimes \sigma_z \\
  & \sigma_x \otimes \mathds{1} \otimes \mathds{1}\;\;\; & & \;\;\;\sigma_z \otimes \mathds{1} \otimes \mathds{1} & \\
\mathds{1} \otimes \sigma_x \otimes \mathds{1} &   & &     & \mathds{1} \otimes \sigma_z \otimes \mathds{1}
     \end{bmatrix}
\end{equation}
\end{widetext}
then, {\em any} three-qubit state violates inequality \eqref{sta}, since the left-hand side of \eqref{sta} becomes $5$.

The right-hand side of Eq.~\eqref{mpt2} is called Mermin's (or ``magic'') pentagram (or star) \cite{Mermin:1990PRLb,Mermin:1993RMP}. The relations of joint measurability of the $10$ observables are illustrated in Fig.~\ref{fig:mermin-pentagram}.


\begin{figure}
    \centering
    \includegraphics[width=0.95\linewidth]{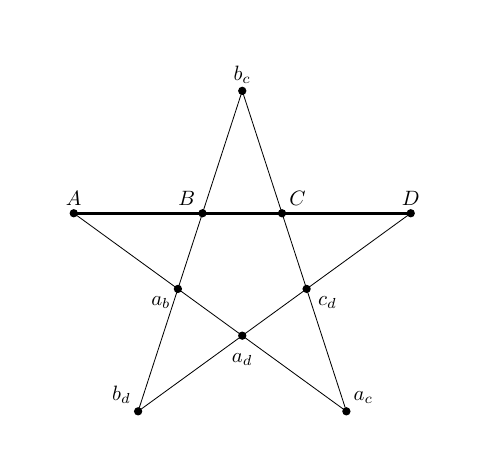}
    \caption{Mermin's  pentagram. Each observable is represented by a dot. 
    Observables in the same line are jointly measurable. The product of the results of measuring the four observables in the thicker line is $-1$. For the other four lines the product of the four results is $+1$. }
    \label{fig:mermin-pentagram}
\end{figure}


\subsection{Mealy machines}
\label{Mealy}


To classically simulate the quantum predictions for sequential measurements on a single system, Kleinmann {\em et al.} \cite{Kleinmann:2011NJP} use an automaton called a \emph{Mealy machine} \cite{Mealy:1955BSTJ}. A Mealy machine has a finite input alphabet $\mathcal{I}$, where each input corresponds to one of the observables that can be measured, a finite output alphabet $\mathcal{O}$, where the possible outputs correspond to the possible results, a set of states $\mathcal{S}$, and a function $\Omega : \mathcal{S}  \times \mathcal{I} \to \mathcal{O}$, called the \emph{output function}, which gives the output given the state and the input, and $\Upsilon : \mathcal{S}  \times \mathcal{I} \to \mathcal{S}$, called the \emph{update function}, which gives the post-measurement state given the state before the measurement and the input. 

The Mealy machine is provided with a starting state $S_1 \in \mathcal{S}$ and a sequence of inputs $(x_1,x_2,\dots)$. It then produces a sequence of corresponding outputs $(y_1,y_2,\dots)$ as follows.
It first produces output $y_1 := \Omega(S_1,x_1) \in \mathcal{O}$ corresponding to the input $x_1$. Next, the state is updated to state $S_2 := \Upsilon(S_1,x_1) \in \mathcal{S}$. This state does not need to be distinct from $S_1$. Both the output ($y_1$) and new state ($S_2$) depend only on the input ($x_1$) and the current state ($S_1$). The machine then applies the same process with $x_2$ (instead of $x_1$) and $S_2$ (instead of $S_1$) in order to produce a new output $y_2 \in \mathcal{O}$, and change to some state $S_3 \in \mathcal{S}$. In such a way, given any initial state and sequence of inputs, the machine produces a corresponding sequence of outputs, one for each input. 

This mirrors a sequence of ideal measurements on an initial quantum state. At each stage, the output depends only on the current state of the system and the observable measured (the input). Moreover, the state changes after each measurement (again only depending on the observable measured and the previous state). Importantly, while the sequence of measurement outputs in the quantum setting is probabilistic, the sequence of outputs obtained by Mealy machines is deterministic. 

In the case that the outputs produced by a Mealy machine $\mathcal{M}$ satisfy the quantum predictions of interest, we say that $\mathcal{M}$ \emph{classically simulates} those predictions. If there is no Mealy machine with fewer states satisfying the predictions, then we say that $\mathcal{M}$ is \emph{optimal} (for the set of predictions). The \emph{memory cost} of classically simulating the predictions is $\log_2(s)$, where $s$ is the number of states of an optimal Mealy machine. This mirrors the fact that there are $2^n$ distinguishable states in the quantum setting (where we view the memory cost as the number $n$ of qubits of the Pauli observables).

The following is an optimal Mealy machine $\mathcal{A}_4$ for the Peres-Mermin square:
\begin{equation} \label{eq:square-4state}
\begin{array}{ll}
S_1 = 
    \begin{bmatrix}
    + & + & (+,2) \\
    + & + & (+,3) \\
    + & + & +
\end{bmatrix},
&
S_2 =
\begin{bmatrix}
    + & + & + \\
    - & + & - \\
    (-,4) & (+,1) & +
\end{bmatrix}, \\
S_3 = 
\begin{bmatrix}
    + & - & -\\
    + & + & + \\
    (+,1) & (-,4) & +
\end{bmatrix}, &
S_4 = 
\begin{bmatrix}
    + & - & (-,3)\\
    - & + & (-,2)\\
    - & - & +
\end{bmatrix}.  \\
\end{array}
\end{equation}
It has four states: no Mealy machine with fewer states can simulate predictions (Ia), (Ib), and (II) for the Peres-Mermin square \cite{Kleinmann:2011NJP}. Each state determines the output for any of the $9$ possible measurements in the Peres-Mermin square and the post-measurement state.
If the machine is initially in state $S_1$ and the sequence $C,c,\gamma$ is measured, then the output for $C$ is $+1$ (as it is indicated by a ``$+$'' in row one, column three of $S_1$). In addition (as it is also indicated in that entry), the state changes into state $S_2$.
Next, the output for $c$ is $-1$ and the state does not change. Finally, the output of $\gamma$ is $+1$ and the state does not change. 

The previous example illustrates that the Mealy machine reproduces quantum prediction (II) for the sequence of observables $C,c,\gamma$. This does not mean that the measurements $m_1,m_2,m_3$ themselves match the outputs produced by the Mealy machine. Indeed, a cursory glance of Eq.~\eqref{eq:square-4state} confirms that $\mathcal{A}_4$ always outputs $+1$ when the input is $\gamma$ \emph{regardless} of state, whereas the value $m_3$ has no such restriction. 

We also emphasize the following fact, which is pointed out by Kleinmann \emph{et al.} \cite{Kleinmann:2011NJP}, and emphasized by Harrysson \cite{Harrysson:2016}: the four-state Mealy machine does \emph{not} simulate all of the deterministic quantum predictions. Indeed, for the sequence of inputs $A,B,c,C$, the product of the measurement outcomes of $A$, $B$, and $C$ is $1$ (since $c$ is mutually compatible with $C$). However, if the Mealy machine is initialized in state $S_1$, the corresponding outputs for the four observables $A,B,c,C$ are $+1,+1,+1,-1$, yielding a product of $(+1)(+1)(-1) = -1$ for the observables $A,B,C$. This example directly shows that the quantum predictions of (Ia), (Ib), and (II) do not cover all of the deterministic quantum predictions.

Surprisingly, not many results are known about the memory cost of simulating these quantum predictions. Indeed, only in the case of the Peres-Mermin square are there exact numbers for the memory cost in the literature, and even in that case there remain open questions. We outline the known results below.

It is shown in \cite{Kleinmann:2011NJP} that the memory cost of simulating (Ia) and (II) for the Peres-Mermin square is $\log_2(3)$ bits, and that of simulating (Ia), (Ib), and (II) is $\log_2(4) = 2$ bits [using the Mealy machine in Eq.~\eqref{eq:square-4state}].
In \cite{Kleinmann:2011NJP}, a 10-state Mealy machine is implicitly defined that simulates all deterministic predictions for the Peres-Mermin square. This machine may be viewed as a suitable $10$-element subset of the $24$ eigenvectors obtained as the four common eigenvectors of each the six contexts. The construction is made explicit in \cite{Harrysson:2016}. Optimality of this machine has not yet been proven. Therefore, the memory cost of simulating all of the deterministic predictions for the Peres-Mermin square is $\leq \log_2(10)$ bits.

In \cite{Kleinmann:2011NJP}, it is proven that, for the set of all 15 two-qubit Pauli observables, the memory cost of simulating (Ia), (Ib), and (II) is, at least, $\log_2(5)$ bits. Harrysson constructs a $27$-state Mealy machine for the set of all 15 two-qubit Pauli observables simulating all deterministic predictions of quantum theory, and so bounds the memory cost from above by $\log_2(27)$ bits. Evidently, this also bounds from above the memory cost of simulating (Ia) and (II), and (Ia), (Ib), and (II).

Finally, Harrysson \cite{Harrysson:2016} also proves an asymptotic result; that the memory cost needed in order to simulate all deterministic predictions of a set of $n$-qubit observables is bounded as $O(n^2)$. 

There are also results that do not exclusively use Mealy machines in the classical simulation, but that still contain them as necessary ingredients. In particular, by using multiple Mealy machines and an external probability distribution, one can simulate nondeterministic predictions of quantum theory. Here outputs and transitions depend not only on input and current state, but also on an additional random parameter.

Fagundes and Kleinmann \cite{Fagundes:2017JPA} showed that $\log_2(3)$ bits is sufficient to reproduce all probabilistic quantum predictions for the Peres-Mermin square for sequences within any given context [i.e., the probabilistic analogue of (Ia) and (II)]. In order to achieve this, they use $80$ modified, but equivalent, copies of the 3-state Mealy machine used to simulate deterministic predictions (Ia) and (II) for the Peres-Mermin square (along with the external source of randomness). 

It is not known what the memory cost is of simulating all probabilistic quantum predictions associated to (Ia), (Ib), and (II) of the Peres-Mermin square in this fashion.

\section{Methods} \label{sec:methodology}


We now describe the techniques that are employed in Sec.~\ref{sec:proofs}. The main tools we use to model sequences of measurements from a given Mealy machine are \emph{directed graphs}. These objects are widely used to model a variety of problems, for example in genetics, scheduling, flows in networks, and many others (see \cite{BangJensen:2008} for some of these applications). 
In this section, $b$ and $B$ do \emph{not} indicate observables of the Peres-Mermin square unless explicitly stated.


\subsection{Sets of observables and Mealy machines}


Each set of observables has an underlying \emph{incidence structure} $H = (P,B)$ that can be obtained by viewing the observables as \emph{points} (elements of $P$), and the maximal contexts as sets of observables (thus sets of points, which are elements of $B$ called \emph{blocks}). In such a way, if an observable is in a context we say that the point is \emph{incident} to the block (or vice-versa). Each of the incidence structures we study here are \emph{configurations} (each point is incident to the same number of blocks, and each block is incident to the same number of points).
It is common to study these abstract structures instead of the sets of observables themselves, since it enables one to exploit the well-known theory of configurations in order to derive properties of the sets (see, e.g., \cite{Saniga:2012}). Throughout Sec.~\ref{sec:methodology}, we also take this approach, viewing a set of observables as an incidence structure whose points are labeled with their corresponding observables.
In order to simplify notation, we say that two points $p,p'$ are \emph{compatible} if their corresponding observables are compatible.

As in Eq.~\eqref{eq:square-4state}, we view
a Mealy machine $\mathcal{M} = (\mathcal{I}, \mathcal{O}, \mathcal{S}, \Omega, \Upsilon)$ as $|\mathcal{S}|$ copies of the configuration with each state-point pair $(S_i,p_j) \in \mathcal{S} \otimes P$ labeled with its output $\Omega(S_i,p_j)$ and transition $\Upsilon(S_i,p_j)$. We refer to a state-point pair $(S_i,p_j)$ as a \emph{vertex} (for reasons which will soon become clear). We also refer to a state-block pair $(S_i,b_k)$ as a \emph{context}. For each point $p \in P$, and block $b \in B$ there are $|\mathcal{S}|$ corresponding vertices, and contexts, respectively (one for each state).

The \emph{sign} of context $(S_i,b_k)$ is defined as 
\begin{equation}
    \pi(S_i,b_k) := \prod_{p_j \in b_k} \Omega(S_i, p_{j}).
\end{equation}
If the sign of a context (state-block pair) is the same as its corresponding context, then we say that it is a \emph{confirmation context}. Otherwise we say that it is a \emph{contradiction context}. For example, for $\mathcal{A}_4$, the sign of $(S_1,\{C,c,\gamma\})$ is $+1$, while the product $Cc\gamma$ is $-I$. Therefore, $(S_1,\{C,c,\gamma\})$ is a contradiction context. On the other hand, $(S_2,\{C,c,\gamma\})$ is a confirmation context since its sign is $-1$. 

For any given state $S_i \in \mathcal{S}$, the set of outputs $\{\Omega(S_i, p) : p \in b\}$ form a $\pm 1$ assignment of the points of the underlying incidence structure. Therefore, each state must have, at least, one contradiction context. The minimum number of contradiction contexts for each state is the minimum number of unsatisfied contexts by any classical hidden variable model. This quantity, called the \emph{contextuality degree}, has been studied thoroughly in recent years (see, e.g., \cite{deBoutray:2022, Muller:2023, Muller:2024, Muller:2024XXX}). 

For the Peres-Mermin square and pentagram the contextuality degree is $1$, while the extended Peres-Mermin set has contextuality degree equal to $3$.




\subsection{Digraphs}


A \emph{directed graph} (or \emph{digraph}) $D = (V,E)$ is defined by a set of \emph{vertices} $V$ and \emph{arcs} $E \subseteq V \times V$. In our cases, the vertices of the digraph will always correspond to state-point pairs of the Mealy machine, and it is for this reason that we call state-point pairs, vertices.
We also write $V(D)$ and $E(D)$ to refer to the vertices and arcs of $D$ respectively.  
For an arc $(u,v) \in E$ we say that $v$ is an \emph{out-neighbor} of $u$, and that $u$ is an \emph{in-neighbor} of $v$. The set of out-neighbors of $v$ is often written as $N^+(v)$, and the set of in-neighbors as $N^-(v)$. This notation also applies to sets $T \subseteq V$ of vertices, so that $N^+(T)$ [$N^-(T)$] is the set of vertices $u$ in $V \setminus T$ for which there is an arc from some $v \in T$ to $u$ ($u$ to $v$).

It is common to represent digraphs pictorially by a diagram with dots (or discs) representing the vertices, and arrows (pointing from the in-neighbor to the out-neighbor) representing the arcs. See Fig.~\ref{fig:square-digraphs}(a) for such a representation.

A \emph{walk} in a graph is a sequence of vertices $v_1,\dots,v_n$ such that $v_{i+1}$ is an out-neighbor of $v_i$ for each $i=1,\dots,n-1$. The walk is \emph{closed} if $v_1 = v_n$. A \emph{path} is a walk with no repeated vertices. A \emph{cycle} is a closed walk such that removing the last vertex of the sequence leaves a path. We say that a vertex $v$ is \emph{reachable} from a vertex $u$ if there is a path from $u$ to $v$. Similarly, for sets $U,U'$ of vertices, we say that $U'$ is reachable from $U$ if there is some vertex of $U'$ that is reachable from some vertex of $U$.
A \emph{source} is a vertex with no in-neighbors, and a \emph{sink} is a vertex with no out-neighbors. A directed graph with no cycles is called a \emph{directed acyclic graph}. Every directed acyclic graph has at least one source and one sink.

A \emph{subdigraph} $D' = (V',E')$ of $D = (V,E)$ is a digraph for which $V' \subseteq V$ and $E' \subseteq E$. We say that a subdigraph is \emph{induced} if $E' = E \cap (V' \times V')$. 
A \emph{strongly connected component} of $D$ is a subdigraph $D' = (V',E')$ such that for any pair of vertices $u,v \in V(D')$, $v$ is reachable from $u$. The vertex sets of the strongly connected components of a digraph partition its vertex set.

From a given digraph $D$ and set of vertices $T \subseteq V(D)$, we may obtain another digraph by \emph{identifying} the vertices of $T$. 
This corresponds to replacing each of the vertices of $T$ with a single vertex whose set of out-neighbors  is $N^+(T)$, and whose set of in-neighbors is $N^-(T)$. By identifying each of the strongly connected components (i.e., identifying each of their vertex sets one by one) we obtain a new digraph $D'$ called the \emph{condensation} of $D$. The condensation is a directed acyclic graph and thus has sources and sinks. We call a source (sink) of $D'$ a \emph{strongly connected source (sink)} of $D$. In general, we are only interested in strongly connected sources and sinks of $D'$ and not actual sources and sinks. Therefore, when there is no possibility of confusion we simply say source or sink of $D$ in lieu of saying strongly connected source or sink.


\subsection{Mealy machine digraphs}


From a Mealy machine $\mathcal{M}$, we construct a digraph $\mathcal{D}$ by associating to every state-point pair $(S_i,p_j)$ a vertex, and by adding the arc from $(S_i,p_j)$ to $(S_k,p_{\ell})$ if and only if $\Upsilon(S_i,p_j)~=~S_k$ and $j \neq k$. Any sequence of observables from the set of observables represents a walk in the directed graph. For simplicity we elect to ignore ``loops'' (arcs from a vertex to itself) since these arcs represent repeated measurements of an observable whilst in the same state.

We call a vertex $(S_i,p_j)$ \emph{simple} if $\Upsilon(S_i,p_j) = S_i$, and \emph{nonsimple} otherwise. Similarly, we say that a context $(S_i, b)$ is \emph{simple} if each of its vertices is simple. In $\mathcal{A}_4$ the vertex $(S_2, C)$ is simple, while the vertex $(S_1,C)$ is nonsimple (since $\Upsilon(S_1,C) = S_2$. Similarly, the context $(S_2, \{C,c,\gamma\})$ is simple, while the context $(S_1, \{C,c,\gamma\})$ is not.


\subsection{Commuting digraphs}


For some subset of compatible points $R \subseteq P$, let $C(R)$ be the set of points that are compatible with all points in $R$. We define the \emph{commuting digraph} $\mathcal{D}_R$ to be the induced subdigraph of $\mathcal{D}$ with vertex set 
\begin{equation}
    \{(S_i,p) : p \in C(R), i=1,\dots,|\mathcal{S}|\}.
\end{equation}
In the case that $R = \{q\}$ consists of a single point $q \in P$, we use the short-hand $\mathcal{D}_q$ instead of $\mathcal{D}_{\{q\}}$.


\begin{figure}
    \centering
\begin{subfigure}{0.48\textwidth}
        \centering
        \includegraphics[width=\linewidth]{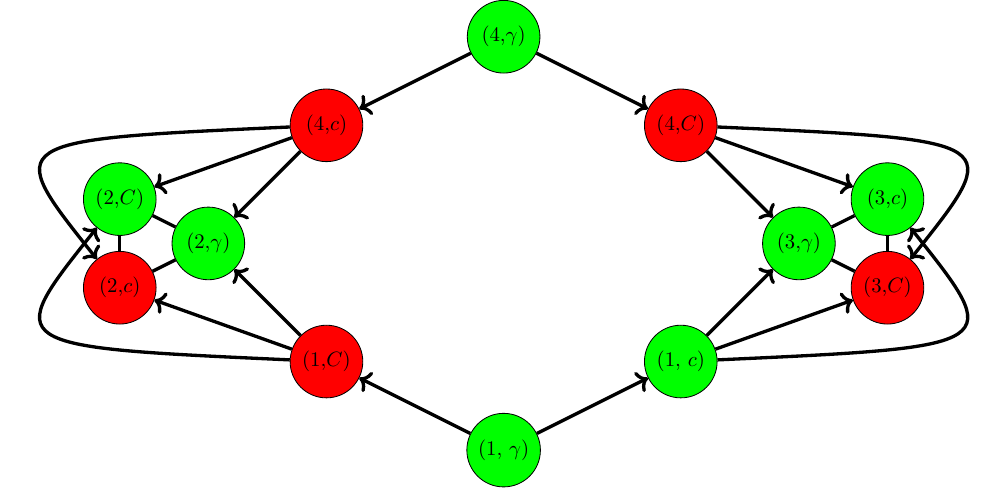}
        \caption{}
        \label{fig:square-Ccg}
    \end{subfigure}

    \vspace{.8cm}
    
    \begin{subfigure}{0.40\textwidth}
        \centering
        \includegraphics[width=\linewidth]{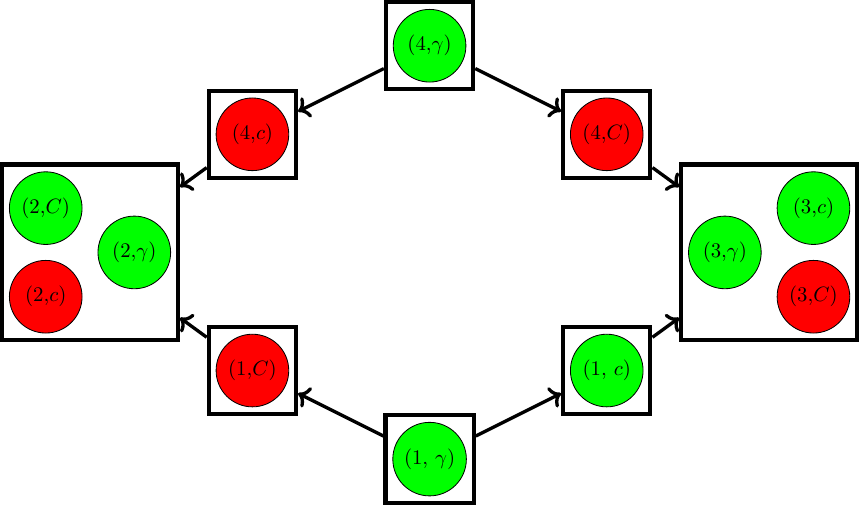}
        \caption{}
        \label{fig:square-Ccg-condensation}
    \end{subfigure}
    \caption{The digraph $\mathcal{D}_{\{C,c,\gamma\}}$ in (a) and its condensation in (b). Each vertex of (a) is labeled as $(i,p)$ where $i$ is the state number and $p$ is the input point (labeled by the corresponding observable). Each rectangle in (b) corresponds to a strongly connected component of (a) that has been identified to a single vertex, so that the digraph in (b) has $8$ vertices. Green indicates that the output is $1$, and red that the output is $-1$. }
    \label{fig:square-digraphs}
\end{figure}
\begin{figure}
    \centering
\begin{subfigure}{0.48\textwidth}
        \centering
        \includegraphics[width=\linewidth]{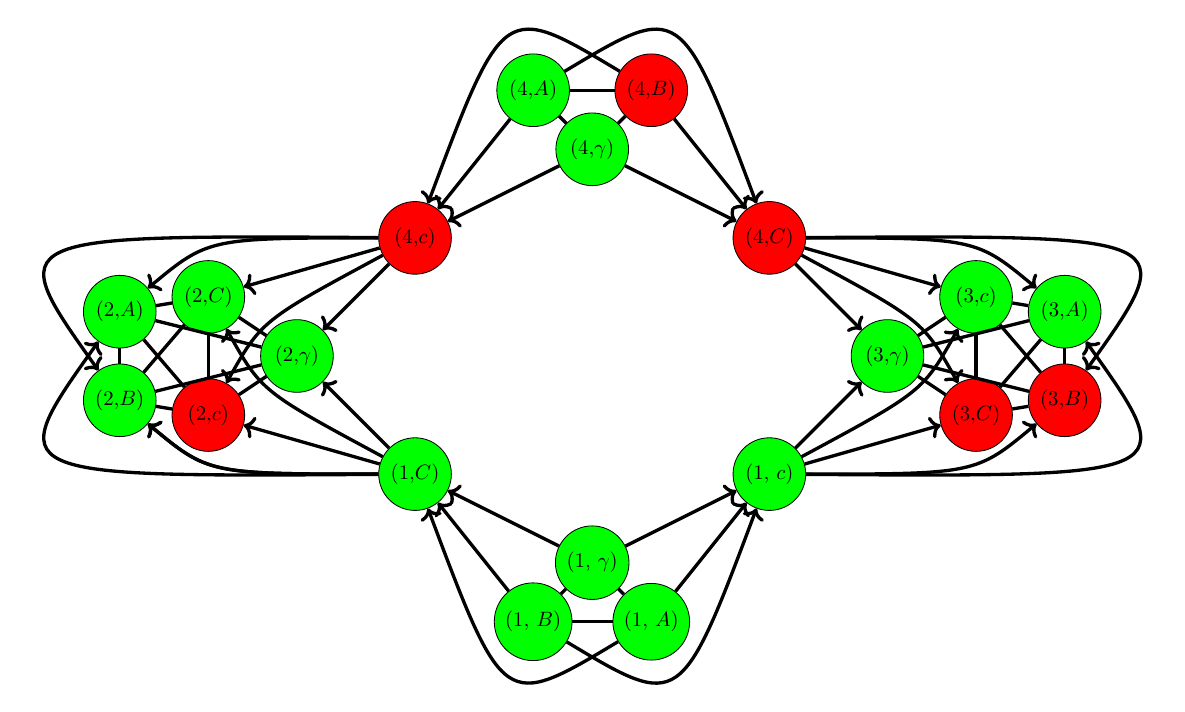}
        \caption{}
        \label{fig:square-C}
    \end{subfigure}

    \vspace{0,8cm}
    
    \begin{subfigure}{0.42\textwidth}
        \centering
        \includegraphics[width=\linewidth]{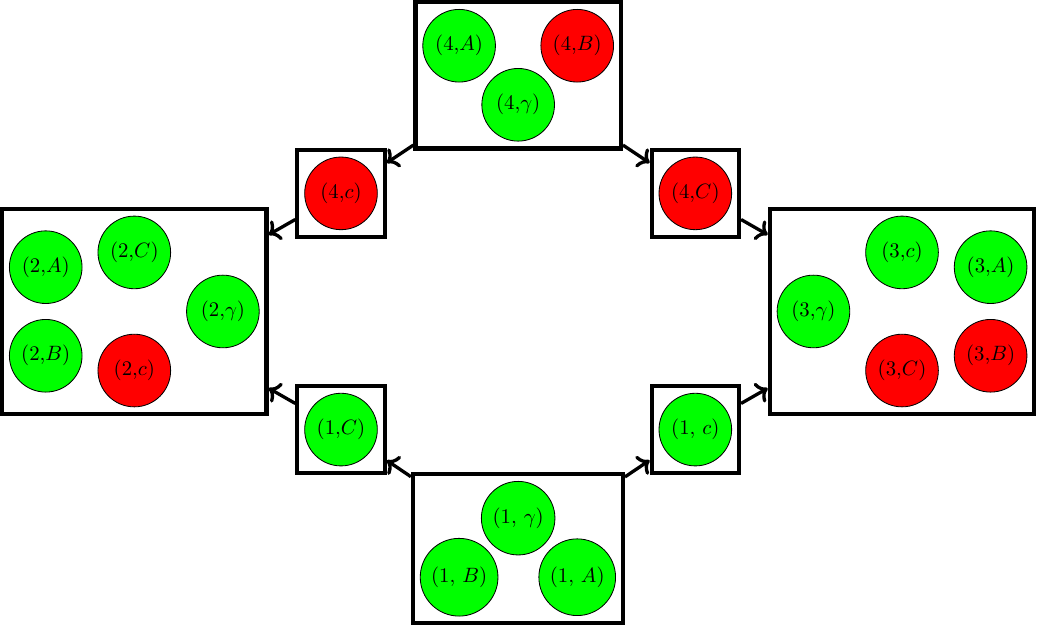}
        \caption{}
        \label{fig:square-C-condensation}
    \end{subfigure}
    \caption{The digraph $\mathcal{D}_{C}$ in (a) and its condensation in (b). Each vertex of (a) is labeled as $(i,p)$, where $i$ is the state number and $p$ is the input point (labeled by the corresponding observable). Each rectangle in (b) corresponds to a strongly connected component of (a) that has been identified to a single vertex so that the digraph in (b) has $8$ vertices.
    Green indicates that the output is $1$, and red that the output is $-1$. }
    \label{fig:square-digraphs2}
\end{figure}


Fig.~\ref{fig:square-digraphs} illustrates the digraph $\mathcal{D}_{\{C,c,\gamma\}}$ for $\mathcal{A}_4$  [in (a)] along with its condensation [in (b)].  Fig.~\ref{fig:square-C} illustrates the digraph $\mathcal{D}_{C}$ for $\mathcal{A}_4$ [in (a)] along with its condensation [in (b)]. Here (and throughout this section) we refer to the points of the Peres-Mermin configuration and their corresponding observables interchangeably. 

For each block $b \in B$, the walks of the commuting digraph $\mathcal{D}_b$ correspond to sequences of observables of the corresponding context, and for each $p \in P$, the walks of the digraph $\mathcal{D}_{\{p\}}$ correspond to sequences of observables each commuting with the observable corresponding to point $p$. 

Let us now motivate the purpose of introducing the condensations through an example with the Peres-Mermin square.  
Say that $\mathcal{A}_4$ is initialized in state $S_1$ and given the sequence of inputs $A,B,\gamma,C,\gamma$ (all compatible with $C$). This corresponds to the walk $(S_1,A), (S_1,B), (S_1,\gamma), (S_1,C), (S_2,\gamma)$ in $\mathcal{D}_C$ (see Fig.~\ref{fig:square-C} noting that vertex $(S_i, \cdot)$ is written there as $(i, \cdot)$). Notice that the arcs from $(S_1,\gamma)$ to $(S_1,C)$ and from $(S_1,C)$ to $(S_2,\gamma)$ are fundamentally different than the rest of the arcs used in the walk. Once such an arc is used, one can no longer return to the previous vertex. That is to say, after $\gamma$ is measured the first time (with the Mealy machine in state $S_1$), the Mealy machine will never take as input~$B$ in state $S_1$ again (indeed it must leave state $S_1$ entirely). This transition represents a fundamental change in the behavior of the Mealy machine. 

These changes are captured exactly by the arcs of the condensation. 
If one simulates the measurement of observables that are each compatible with $C$ at random (each with nonzero probability) using $\mathcal{A}_4$, then one necessarily ends up in a strongly connected sink, which then dictates the long-term behavior of the machine. This is true in general (for both the digraphs $\mathcal{D}_b$ and $\mathcal{D}_p$).
Unsurprisingly, understanding properties of the strongly connected sinks proves to be useful.



For a given commuting digraph $\mathcal{D}_R$, we call the vertices 
\begin{equation}
    \{(S,p) : p \in C(R)\}
\end{equation}
the \emph{entire state} $S$. Where there is no possibility of confusion, we simply refer to this as the state $S$.

\begin{proposition} \label{prop:sink-entire-state}
    Let $R \subseteq P$, and let $\mathcal{D}_R$ be the corresponding commuting digraph. The strongly connected sinks of $\mathcal{D}_R$ are unions of entire states. 
\end{proposition}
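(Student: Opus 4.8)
The plan is to show that if a strongly connected sink of $\mathcal{D}_R$ contains one vertex $(S,p)$ of an entire state $S$, then it contains every vertex $(S,p')$ with $p' \in C(R)$. The intuition is that the vertices of one entire state are never connected to each other by arcs of $\mathcal{D}_R$ (measuring an observable compatible with everything in $R$ while in state $S$ either keeps the state $S$ — a loop, which we ignore — or moves to a different state), so an entire state is an independent set in the commuting digraph, but crucially its vertices all share the \emph{same} out-neighborhoods in the following sense: once the machine is in state $S$, the sequence of inputs it has yet to receive determines its future, and this future does not depend on which vertex $(S,p)$ we ``arrived through.'' I would make this precise by working with the condensation.

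First I would recall that a strongly connected sink $K$ of $\mathcal{D}_R$ is a strongly connected component with no outgoing arcs in the condensation; equivalently, $N^+(V(K)) = \emptyset$ in $\mathcal{D}_R$, i.e. $K$ is closed under taking out-neighbors. So it suffices to prove: if $(S,p) \in V(K)$ for some $p \in C(R)$, then $(S,p') \in V(K)$ for all $p' \in C(R)$. Since $K$ is strongly connected, for any other vertex $(S',q) \in V(K)$ there is a walk inside $K$ from $(S',q)$ back to $(S,p)$; in particular there is some vertex $(S,p'')$ reachable (within $K$) whose transition leads into $S$, i.e. an arc $((S'',r),(S,p))$ lies in $K$ with $\Upsilon(S'',r) = S$. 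The key observation is then: from the vertex $(S'',r)$, for \emph{every} $p' \in C(R)$ there is an arc to $(S,p')$ — because the arc $((S'',r),(S,p'))$ exists in $\mathcal{D}$ exactly when $\Upsilon(S'',r) = S$ and $r \neq p'$, and the first condition is already satisfied while the second holds for all but possibly one choice of $p'$. For that single exceptional $p'$ (namely $p' = r$, if $r \in C(R)$), I would instead argue that $(S,r) \in V(K)$ too: since $S'' \to S$ via input $r$ while $(S'',r) \in V(K)$, and $K$ is closed under out-neighbors and strongly connected, pick any $p^* \in C(R)$ with $p^* \neq r$ (such a point exists as $|C(R)| \geq 2$ for all the configurations considered — each block has at least two points, or more robustly one checks $|C(R)|\ge 2$ directly), get $(S,p^*) \in V(K)$, and from $(S,p^*)$ there is an arc to $(S,r)$ provided the machine in state $S$ on input $p^*$ moves to some state; if instead $\Upsilon(S,p^*) = S$ for every $p^* \neq r$, then $S$ together with the arc structure shows $S$ sits in its own trivial strongly connected component unless... — this degenerate sub-case I would handle by noting it cannot occur in a sink unless the sink is the single entire state $S$, which is exactly the desired conclusion.

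The main obstacle, as the sketch above already hints, is the bookkeeping around the ``$j \neq k$'' (no-loop) convention in the definition of $\mathcal{D}$: the arc $(S,p) \to (S,p')$ is present precisely when the transition out of $(S,p)$ lands in a state \emph{different} from $S$ — wait, re-reading the definition, the arc from $(S_i,p_j)$ to $(S_k,p_\ell)$ is added when $\Upsilon(S_i,p_j) = S_k$ and $j \neq \ell$, so within a fixed target state $S_k$ one has arcs into \emph{all but one} of its vertices. Thus an entire state is \emph{almost} entered all-at-once but for one vertex, and the whole proof reduces to patching that one missing vertex, which I would do using strong connectivity of $K$ together with sink-closedness (out-neighbor-closedness) as above. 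I expect the cleanest writeup to isolate one lemma: ``in $\mathcal{D}_R$, if a sink contains any vertex of entire state $S$, it contains all of $S$,'' proved by the two-step reachability argument (enter $S$ at all-but-one vertices directly from the in-neighbor, then recover the last vertex by a short detour through a second vertex of $S$), and then conclude that a sink, being a union of strongly connected classes each of which — once it meets an entire state — swallows it whole, is a union of entire states.
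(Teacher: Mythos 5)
Your first paragraph is essentially the paper's proof: locate a vertex $v$ in the sink whose transition $\Upsilon(v)$ equals the state $S$ in question (either the vertex of $S$ you started from is simple, or you follow the walk that strong connectivity gives you back into $S$ and take the arc that re-enters $S$), and then use the fact that a strongly connected sink is closed under out-neighbors to conclude that all of the entire state $S$ lies in the sink. Under the convention the paper actually intends, that argument is already complete and nothing more is needed: the prose defines a loop as ``an arc from a vertex to itself,'' so the only out-neighbor of $v$ in state $S$ that could be missing is $v$ itself, and $v$ is already in the sink. (The displayed condition ``$j \neq k$'' in the definition of $\mathcal{D}$ compares a point index with a state index and is evidently a typo for excluding the pair $(i,j)=(k,\ell)$.)

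The trouble is in your second and third paragraphs, where you switch to reading the condition as $j \neq \ell$, i.e., no arc from $(S_i,p_j)$ to $(S_k,p_j)$ even when $S_k \neq S_i$. Under that reading the proposition is simply \emph{false}, so no patch can succeed: take $C(R)=\{r,p^*\}$, two states $S,S'$ with $\Upsilon(S,r)=S$, $\Upsilon(S,p^*)=S'$, $\Upsilon(S',r)=S$, $\Upsilon(S',p^*)=S'$; then the only arcs are $(S,r)\to(S,p^*)$, $(S,p^*)\to(S',r)$, $(S',r)\to(S,p^*)$, $(S',p^*)\to(S',r)$, and $\{(S,p^*),(S',r)\}$ is a strongly connected sink that is not a union of entire states. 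Your patch is also internally inconsistent: under the $j\neq\ell$ reading the arc $(S,p^*)\to(S,r)$ exists precisely when $(S,p^*)$ is \emph{simple} (the machine always ``moves to some state''; what matters is whether that state is $S$), so your case split is backwards, and the genuinely problematic case --- every $(S,p^*)$ with $p^*\neq r$ nonsimple --- is exactly the one realized by the counterexample above and is not handled by the ``degenerate sub-case'' remark. The fix is not to patch but to adopt the loops-only convention, after which your lemma reduces to the paper's three-line argument.
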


\begin{proof}
    Let $\Sigma$ be a strongly connected sink of $\mathcal{D}_R$, and let $(S_i,p_j)$ be a vertex of $\Sigma$ for some state $S_i \in \mathcal{S}$ and point $p_j \in C(R)$. If $(S_i,p_j)$ is simple, then the entire state $S_i$ is in $\Sigma$. If not, then $\Upsilon(S_i, p_j) = 
    S_k$ for some $k \neq i$. Since $\Sigma$ is strongly connected, there is a path $Q$ from the vertex $(S_k,p_j)$ to $(S_i,p_j)$. Therefore, there must be a vertex $v$ on the path $Q$ for which $\Upsilon(v) = S_i$. But then the entire state $S_i$ is a subset of the out-neighbors of $v$. This implies that the entire state is in $\Sigma$, and so $\Sigma$ must be a union of states.
\end{proof}

Since each strongly connected sink $\Sigma$ is a union of entire states, we denote $\Sigma$ simply by the set of entire states it contains. For example, the two strongly connected sinks of Fig.~\ref{fig:square-digraphs} are denoted $\{S_2\}$ and $\{S_3\}$ respectively. Moreover, we say that the \emph{order} of a strongly connected sink $\Sigma$, denoted $|\Sigma|$ is the number of entire states it contains. We write also $S \in \Sigma$ for some state $S \in \mathcal{S}$ if $\Sigma$ contains the entire state $S$.

\subsubsection{The digraphs $\mathcal{D}_b$} 
\label{sec:digraphs-Db}


In this section, we assume that $\mathcal{M} = (\mathcal{I}, \mathcal{O}, \mathcal{S}, \Omega, \Upsilon)$ is a Mealy machine satisfying predictions (Ia) and (II) for some set of observables with underlying incidence structure $H=(P,B)$, and that each commuting digraph is constructed from this Mealy machine. Moreover, we assume that $b \in B$ is a block of $H$.

\begin{proposition} \label{prop:Db-state-in-sink}
    Let $S \in \mathcal{S},\ b \in B$. If $S \in \Sigma$ for a strongly connected sink of $\mathcal{D}_b$ then $(S,b)$ is a confirmation context. If moreover, $|\Sigma| = 1$, then $(S,b)$ is a simple context.
\end{proposition}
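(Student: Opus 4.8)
The plan is to analyze what happens in a strongly connected sink $\Sigma$ of $\mathcal{D}_b$ and use predictions (Ia) and (II) together with the structure result of Proposition~\ref{prop:sink-entire-state}. First I would fix a state $S \in \Sigma$ and argue about the dynamics of measuring the observables of the context $b = \{p_1,\dots,p_m\}$ while staying inside $\Sigma$. The key observation is that since $\Sigma$ is a sink, any sequence of measurements of observables in $b$, starting from any vertex of $\Sigma$, keeps the machine inside $\Sigma$; and since $\Sigma$ is strongly connected, from state $S$ we can measure a sequence of observables of $b$ and return to state $S$. Along such a closed walk, prediction (Ia) forces each observable's output to be locked: once $p_j$ is measured in some state of $\Sigma$, every subsequent measurement of $p_j$ (before leaving $\Sigma$, which never happens) yields the same output. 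The point is that this fixes a single consistent $\pm 1$ assignment $a_j$ to each $p_j \in b$ that is ``seen'' throughout $\Sigma$.

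Next I would invoke prediction (II): measuring all of $p_1,\dots,p_m$ in sequence (which is a walk inside $\Sigma$ by the sink property) produces outputs whose product must equal the quantum-mechanical sign of the context $b$. But those outputs are exactly the locked values $a_1,\dots,a_m$ just described, so $\prod_j a_j$ equals the sign of $b$. On the other hand, $\pi(S,b) = \prod_j \Omega(S,p_j)$; here I need that $\Omega(S,p_j) = a_j$ for each $j$, i.e. the output assigned by the entire state $S$ to $p_j$ coincides with the locked value. This holds because $S \in \Sigma$ means the whole entire state $S$ — including the vertex $(S,p_j)$ — lies in $\Sigma$, and the argument above shows the output at any vertex $(S',p_j)$ with $S' \in \Sigma$ is $a_j$. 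Hence $\pi(S,b)$ equals the sign of $b$, which is exactly the statement that $(S,b)$ is a confirmation context.

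For the second claim, suppose additionally $|\Sigma| = 1$, so $\Sigma$ consists of the single entire state $S$ (plus possibly some transient nonsimple vertices — but by Proposition~\ref{prop:sink-entire-state} it is a union of entire states, hence exactly $S$). I would argue that every vertex $(S,p_j)$ with $p_j \in b$ must be simple: if some $(S,p_j)$ were nonsimple, then $\Upsilon(S,p_j) = S_k$ with $k \neq i$; but the arc from $(S,p_j)$ leads to vertices of state $S_k$ that lie inside $\Sigma$ (sink property applied to inputs compatible with $b$ — note $p_j \in b$ so its out-neighbors in $\mathcal{D}_b$ are the vertices $(S_k, p)$ for $p \in b$), forcing the entire state $S_k$ into $\Sigma$ and contradicting $|\Sigma|=1$. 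Therefore all vertices of $(S,b)$ are simple, i.e. $(S,b)$ is a simple context.

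The main obstacle I anticipate is the bookkeeping in the first part: carefully justifying that the outputs along a closed walk through $\Sigma$ are genuinely constant per observable, which requires combining (Ia) with the fact that the walk never exits $\Sigma$, and then matching those locked outputs to $\Omega(S,\cdot)$. One subtlety is that (Ia) only directly constrains repeated measurements of a single observable interleaved with others in the same context, so I must make sure the closed walk I build visits each $p_j$ and returns to $S$ while staying in $b$ — which is exactly what strong connectedness inside $\mathcal{D}_b$ provides. Everything else is routine once this locking phenomenon is established.
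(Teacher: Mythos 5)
Your proposal is correct and follows essentially the same route as the paper's proof: strong connectedness of $\Sigma$ yields a walk through all vertices of the entire state $S$, predictions (Ia) and (II) then force $\pi(S,b)$ to equal the sign of $b$, and when $|\Sigma|=1$ the sink property forces every out-neighbor of $(S,p)$ to remain in state $S$, making the context simple. The paper compresses this into two sentences; your ``locking'' discussion is just the explicit bookkeeping behind the same argument.
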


\begin{proof}
    Since $\Sigma$ is strongly connected, there is a walk using each of the vertices of the state $S$. Therefore, predictions (Ia) and (II) imply that $(S,b)$ is a confirmation context. 
    If $|\Sigma| = 1$ (so that $\Sigma = \{S\}$), then all out-neighbors of $(S,p)$ being in $\Sigma$ must be in the state $S$.
\end{proof}

Motivated by this, we call a strongly connected sink of order $1$ a \emph{simple sink}.

The strongly connected sinks of $\mathcal{D}_b$ fully define the possible outputs obtained for the points of block $b$. We illustrate through an example. The strongly connected sinks of $\mathcal{D}_{\{C,c,\gamma\}}$ (see Fig.~\ref{fig:square-digraphs}) are $\{S_2\}, \{S_3\}$. For $S_2$ the outputs are 
\begin{equation}
\Omega(S_2, C) = 1, \Omega(S_2, c) = -1, \Omega(S_2, \gamma) = 1,
\end{equation}
and for $S_3$ the outputs are 
\begin{equation}
\Omega(S_3, C) = -1, \Omega(S_3, c) = 1, \Omega(S_3, \gamma) = 1.
\end{equation}
Therefore, for \emph{any} sequence of inputs from $\{C,c,\gamma\}$, the Mealy machine of Eq.~\eqref{eq:square-4state} must assign outputs to $C,c,\gamma$ according to either $\Omega(S_2, \cdot)$ or $\Omega(S_3, \cdot)$. For example, there is no sequence of inputs in the context $\{C,c,\gamma\}$ to the Mealy machine where $C$ and $c$ yield the same output. 

For a strongly connected sink $\Sigma$ and state $S \in \Sigma$, we use the notation $\Omega_{\Sigma}(p)$ to denote the output $\Omega(S,p)$ (valid for each state in $\Sigma$). Similarly, we denote by $\Omega_{\Sigma}(b)$ the ordered output set 
\begin{equation}
\Omega_{\Sigma}(b) := \{\Omega(S,p) : S \in \Sigma\, p \in b\}. \label{eq:sink-homogeneity}
\end{equation}

Two important tools that Kleinmann {\em et al.} \cite{Kleinmann:2011NJP} used in proving the optimality of $\mathcal{A}_4$ are related to contradiction contexts. Both of these results are stated in terms of the Peres-Mermin square, but can easily be generalized to any set of observables (as the proofs in \cite{Kleinmann:2011NJP} indicate).

The first result we state as a proposition.

\begin{proposition}\cite[Appendix C, 4]{Kleinmann:2011NJP} \label{prop:Db-2nonsimple}
    Each contradiction context $(S,b)$ must have, at least, two nonsimple vertices $(S,p),(S,p')$.
\end{proposition}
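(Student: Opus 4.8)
## Proof proposal for Proposition~\ref{prop:Db-2nonsimple}

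The plan is to argue by contradiction, showing that a contradiction context with at most one nonsimple vertex leads to a violation of predictions (Ia) and (II). Suppose $(S,b)$ is a contradiction context with at most one nonsimple vertex; let $b = \{p_1,\dots,p_k\}$ and suppose, without loss of generality, that $(S,p_2),\dots,(S,p_k)$ are all simple (so that $\Upsilon(S,p_j) = S$ for $j \geq 2$), while $(S,p_1)$ may or may not be simple.

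The key step is to exhibit a concrete finite sequence of inputs drawn entirely from $b$ which, when fed to $\mathcal{M}$ starting in state $S$, produces outputs contradicting (Ia) or (II). First I would consider the sequence $p_1, p_2, p_3, \dots, p_k, p_1$. Measuring $p_1$ in state $S$ yields output $\Omega(S,p_1)$ and moves the machine to some state $S' := \Upsilon(S,p_1)$ (possibly $S' = S$). Now measure $p_2,\dots,p_k$ in order: I claim the machine stays in whatever state it is in throughout this subsequence. The point is that if the machine were still in state $S$ when reading $p_j$, simplicity of $(S,p_j)$ keeps it in $S$; and one must check that reading $p_2,\dots,p_k$ starting from $S'$ also cannot leave $S'$ — here is where the argument needs care. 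Actually the cleaner route is: since all of $p_2,\dots,p_k$ are simple \emph{in state $S$}, consider instead the sequence $p_2,p_3,\dots,p_k,p_1,p_2$. Starting in $S$, reading $p_2,\dots,p_k$ leaves the state unchanged at $S$ (each is simple in $S$), producing outputs $\Omega(S,p_2),\dots,\Omega(S,p_k)$; then reading $p_1$ produces $\Omega(S,p_1)$ and moves to $S'$; then reading $p_2$ again produces $\Omega(S',p_2)$.

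Now I combine two constraints. First, prediction (Ia) applied to the repeated measurement of $p_2$ within the single context $b$ forces $\Omega(S,p_2) = \Omega(S',p_2)$. Second, prediction (II) must hold for \emph{some} assignment consistent with a sequence of measurements within $b$; in particular, considering the prefix $p_2,\dots,p_k,p_1$ together with (II) for the block $b$, the product $\prod_{j} \Omega(S,p_j) \cdot (\text{possible corrections})$ must equal $\pi(b)$, the sign of the context. But the outputs actually produced for the first full pass are exactly $\Omega(S,p_1),\dots,\Omega(S,p_k)$, whose product is $\pi(S,b) \neq \pi(b)$ by the assumption that $(S,b)$ is a contradiction context. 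To turn this into a genuine contradiction I would track what happens after leaving $S$: reaching $S'$ and continuing to measure within $b$, prediction (II) and the consistency forced by (Ia) propagate the "wrong" sign, so that no state reachable by measuring only within $b$ can ever satisfy (II) — contradicting Proposition~\ref{prop:Db-state-in-sink} applied to a strongly connected sink of $\mathcal{D}_b$ reachable from $(S,b)$, since that sink would have to be a confirmation context yet inherits the contradiction. The one nonsimple vertex $(S,p_1)$ is not enough "room" to fix both the repeatability constraints (Ia) for $p_2,\dots,p_k$ and the parity constraint (II), because after a single transition the machine is locked into a state whose outputs on $p_2,\dots,p_k$ agree with those of $S$.

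The main obstacle is the bookkeeping in the last step: making precise how a single nonsimple vertex fails to repair the sign, i.e., showing rigorously that with $\geq k-1$ simple vertices the outputs on $p_2,\dots,p_k$ are frozen across the only available transition, so the product is pinned to the wrong value $\pi(S,b)$. Since this proposition is quoted verbatim from \cite[Appendix C, 4]{Kleinmann:2011NJP} and only needs to be restated in the present graph-theoretic language, I would ultimately phrase the argument directly in terms of the commuting digraph $\mathcal{D}_b$: a contradiction context with $\leq 1$ nonsimple vertex has almost all of its out-arcs being loops, so the strongly connected component containing $(S,b)$'s vertices is essentially $\{S\}$ itself or reaches a simple sink $\{S\}$, which by Proposition~\ref{prop:Db-state-in-sink} would have to be a confirmation context — contradicting that $(S,b)$ is a contradiction context.
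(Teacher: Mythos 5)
The paper does not actually prove this proposition---it is quoted from Kleinmann \emph{et al.} \cite[Appendix C]{Kleinmann:2011NJP} and used as a black box---so there is no in-paper proof to compare against. Judged on its own merits, your proposal does contain a correct and, in fact, already complete argument, but you do not seem to recognize this and you bury it under unnecessary machinery. The whole proof is the following: if $(S,b)$ has at most one nonsimple vertex, order the context as $p_2,\dots,p_k,p_1$ with the (unique, possibly) nonsimple vertex last. Feeding this sequence to $\mathcal{M}$ in state $S$, the machine remains in $S$ for every measurement up to and including the moment it reads $p_1$, so the outputs are exactly $\Omega(S,p_2),\dots,\Omega(S,p_k),\Omega(S,p_1)$, whose product is $\pi(S,b)$. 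Since this sequence measures each observable of $b$ exactly once, prediction (II) requires the product to equal the correct sign of $b$, which it does not because $(S,b)$ is a contradiction context. That is the entire contradiction; prediction (Ia), the appended second measurement of $p_2$, the ``propagation of the wrong sign,'' and the appeal to strongly connected sinks are all superfluous. You flag ``the bookkeeping in the last step'' as the main obstacle, but there is no last step.

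Two genuine (if minor) defects in what you wrote beyond the excess: first, your closing digraph reformulation is incorrect as stated---with exactly one nonsimple vertex $(S,p_1)$ the state $S$ has an outgoing arc in $\mathcal{D}_b$, so $\{S\}$ is \emph{not} a sink, and the claim that the component ``is essentially $\{S\}$ itself or reaches a simple sink $\{S\}$'' does not hold; making that route rigorous would require Proposition~\ref{prop:Db-2sink}, which is a separate statement with its own proof. Second, your first attempted sequence $p_1,p_2,\dots,p_k,p_1$ correctly worries about what happens in $S'=\Upsilon(S,p_1)$, but the fix is not to invoke (Ia)---it is simply to put $p_1$ last so that the question never arises. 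Trimmed to its core, your argument is the standard one and is sound.
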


The second result (\cite[Appendix C, 4]{Kleinmann:2011NJP}) says that for each contradiction context $(S,b)$ there must be distinct states $S_1,S_2$ for which $(S_1,b),(S_2,b)$ are confirmation contexts. We generalize this.

\begin{proposition} \label{prop:Db-2sink}
    Let $(S,b)$ be a contradiction context for some $b \in B, S \in \mathcal{S}$. Then, $\mathcal{D}_b$ has, at least, two strongly connected sinks $\Sigma_1, \Sigma_2$ reachable from $S$ for which $\Omega_{\Sigma}(b) \neq \Omega_{\Sigma'}(b)$. Moreover, $S$ is in no strongly connected sink of $\mathcal{D}_b$.
\end{proposition}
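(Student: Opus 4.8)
The plan is to combine the two earlier results about contradiction contexts --- Proposition \ref{prop:Db-2nonsimple} (every contradiction context has at least two nonsimple vertices) and Proposition \ref{prop:Db-state-in-sink} (strongly connected sinks of any commuting digraph are unions of entire states) --- with a pigeonhole argument on the signs of the sinks. First I would observe that $S$ cannot lie in any strongly connected sink of $\mathcal{D}_b$: if it did, Proposition \ref{prop:Db-state-in-sink} would force $(S,b)$ to be a confirmation context, contradicting the hypothesis that it is a contradiction context. This gives the last sentence of the statement immediately, and it also tells us that from the vertex $(S,p)$ (any $p \in b$) a walk must eventually leave the strongly connected component of $(S,p)$ and, following the condensation downward, reach at least one strongly connected sink.

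Next I would argue that at least two \emph{distinct} sinks are reachable from $S$. By Proposition \ref{prop:Db-2nonsimple}, pick two nonsimple vertices $(S,p)$ and $(S,p')$ in the contradiction context; then $\Upsilon(S,p) = S_k$ and $\Upsilon(S,p') = S_{k'}$ with $k,k' \neq$ (the index of) $S$. The key point is that one can reach a sink from $(S_k, \cdot)$ and a sink from $(S_{k'}, \cdot)$, and these need not a priori be different, so a sign argument is needed to separate them. Here is where predictions (Ia) and (II) enter: for any strongly connected sink $\Sigma$ reachable from $S$, the restriction of $\Omega$ to the points of $b$ in $\Sigma$ is a $\pm1$ assignment whose product equals the quantum sign of $b$ (this is exactly the content used in Proposition \ref{prop:Db-state-in-sink}). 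If $\mathcal{D}_b$ had only one reachable sink $\Sigma$, then every walk in $\mathcal{D}_b$ starting at $(S,p)$ for $p\in b$ ends with the outputs frozen to $\Omega_\Sigma(b)$; in particular one could realize a repeated measurement of $p$ after cycling through $\Sigma$, so prediction (Ia) forces $\Omega(S,p) = \Omega_\Sigma(p)$ for every $p\in b$, whence $(S,b)$ would be a confirmation context --- again a contradiction. So at least two sinks $\Sigma_1,\Sigma_2$ are reachable from $S$.

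Finally I would upgrade ``two reachable sinks'' to ``two reachable sinks with $\Omega_{\Sigma_1}(b) \neq \Omega_{\Sigma_2}(b)$.'' Suppose for contradiction that every strongly connected sink reachable from $S$ carried the \emph{same} ordered output vector $\omega := \Omega_{\Sigma}(b)$ on the points of $b$. Then, by the same reachability-forces-agreement argument as above (any input $p\in b$ can be repeated after a long walk that lands in some reachable sink, and all such sinks agree), prediction (Ia) would force $\Omega(S,p) = \omega_p$ for each $p \in b$, making $(S,b)$ a confirmation context, contradiction. Hence two of the reachable sinks must differ in their output vectors on $b$, which is the claim.

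The main obstacle I expect is the careful bookkeeping in the step ``a repeated measurement of $p$ can be realized after a walk that ends inside a reachable sink'': one must make sure that, starting from the nonsimple vertex $(S,p)$, there genuinely is a walk in $\mathcal{D}_b$ (i.e.\ using only inputs from $b$) that both enters a strongly connected sink and revisits the input $p$, so that prediction (Ia) applies to pin down $\Omega(S,p)$. This uses that the sink is strongly connected (so $p$ recurs within it) together with the fact that $p \in b \subseteq C(\{b\})$, so the relevant vertices actually lie in $\mathcal{D}_b$; the rest is the pigeonhole on signs, which is routine once the reachability statement is in place.
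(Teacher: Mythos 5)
Your argument is correct and follows essentially the same route as the paper: both rest on the facts that strongly connected sinks of $\mathcal{D}_b$ are confirmation contexts and that prediction (Ia) forces $\Omega(S,p)$ to agree with $\Omega_{\Sigma}(p)$ for any sink $\Sigma$ reachable after first measuring $p$ in state $S$, so that a single (or unanimous) family of reachable sinks would make $(S,b)$ a confirmation context. The only cosmetic differences are that the paper pinpoints one point $p$ where $S$ disagrees with a first sink $\Sigma_1$ and then produces a second sink agreeing with $S$ at $p$, and that it deduces ``$S$ is in no sink'' from the multiplicity of reachable sinks rather than directly from Proposition~\ref{prop:Db-state-in-sink} as you do.
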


\begin{proof}
    Let $\Sigma_1$ be a strongly connected sink reachable from $S$ in $\mathcal{D}_b$. Since $(S,b)$ is a contradiction context, it must be the case that $\Omega_{\Sigma_1}(p) \neq \Omega(S,p)$ 
    for some $p \in b$. Therefore, there must be a second strongly connected sink $\Sigma_2$ that is reachable from $S$ with $\Omega_{\Sigma_2}(p) = \Omega(S,p)$. Clearly, $\Omega_{\Sigma_1}(b) \neq \Omega_{\Sigma_2}(b)$.

    Since there are multiple strongly connected sinks reachable from $S$, it cannot be in a strongly connected sink.
\end{proof}

Finally, we remark that the digraphs $\mathcal{D}_b$ allow us to design a simple check for whether or not a Mealy machine satisfies predictions (Ia) and (II). Namely, we first check that each of the products of the outputs $\Omega_{\Sigma}$ of the strongly connected sinks $\Sigma$ have the correct sign. Then, we check that, for each strongly connected sink $\Sigma$, that all the vertices $v = (S,p)$ for which $\Sigma$ is reachable have the same sign as in the strongly connected sink (i.e.,  $\Omega(S,p) = \Omega_{\Sigma}(p)$). These conditions are satisfied if and only if the Mealy machine satisfies predictions (Ia) and (II). Note that the sets of vertices that can reach a strongly connected sink are called a \emph{weakly connected component} in digraph theory, and the decomposition described above is the decomposition of a digraph into its weakly connected components. 

\subsubsection{The digraphs $\mathcal{D}_p$} \label{sec:digraphs-Dp}


In Sec.~\ref{sec:digraphs-Db}, we studied the digraphs $\mathcal{D}_b$ that naturally model sequences of observables within a given context [and thus relate directly to predictions (Ia) and (II)]. In this section, we study the digraphs $\mathcal{D}_p$ corresponding to the points $p \in P$. These model sequences of observables that are all compatible with the observable corresponding to $p$ [and thus relate directly to prediction (Ib)]. We prove analogous results to the previous section, and discuss the connection between the two digraphs $\mathcal{D}_b$ and $\mathcal{D}_p$. To start, we prove results for Mealy machines that only need to satisfy predictions (Ia) and (II) [not yet assuming (Ib)]. 

Let us begin by noting the obvious: if $p \in b$ for some block $b \in B$, then $\mathcal{D}_p$ contains $\mathcal{D}_b$ as an induced subdigraph. For example, the digraph $\mathcal{D}_C$ of Fig.~\ref{fig:square-digraphs2}(a) contains the digraph
$\mathcal{D}_{\{C,c,\gamma\}}$ of Fig.~\ref{fig:square-digraphs}(a) as an induced subdigraph. This leads to the following simple, but useful observation.

\begin{proposition} \label{prop:Dp-reachability}
    Let $p \in b$, and let $S,S' \in \mathcal{S}$. If $S'$ is reachable from $S$ in $\mathcal{D}_b$, then $S'$ is reachable from $S$ in $\mathcal{D}_p$.
\end{proposition}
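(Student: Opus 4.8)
The plan is to unravel the definitions and observe that $\mathcal{D}_b$ is an \emph{induced} subdigraph of $\mathcal{D}_p$ whenever $p \in b$, so any walk witnessing reachability in $\mathcal{D}_b$ is literally a walk in $\mathcal{D}_p$. First I would recall that $\mathcal{D}_p = \mathcal{D}_{\{p\}}$ has vertex set $\{(S_i,q) : q \in C(\{p\}),\ i=1,\dots,|\mathcal{S}|\}$, and since $p \in b$ means every point of $b$ is compatible with $p$, we have $b \subseteq C(\{p\})$; hence every vertex $(S_i,q)$ with $q \in b$ is a vertex of $\mathcal{D}_p$. Because both digraphs are induced subdigraphs of the ambient Mealy-machine digraph $\mathcal{D}$ (the arc $(S_i,q)\to(S_k,q')$ is present exactly when $\Upsilon(S_i,q)=S_k$ and $q\neq q'$, a condition that does not reference $b$ or $p$), the arc set of $\mathcal{D}_b$ is precisely the arc set of $\mathcal{D}_p$ restricted to the $b$-vertices. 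This gives the key containment $\mathcal{D}_b \subseteq \mathcal{D}_p$ as an induced subdigraph, already noted in the paragraph preceding the proposition.

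Next I would translate the reachability hypothesis. Saying ``$S'$ is reachable from $S$ in $\mathcal{D}_b$'' means, in the vocabulary set up after Proposition~\ref{prop:sink-entire-state}, that the entire state $S$ and the entire state $S'$ both lie in $\mathcal{D}_b$ and there is a path in $\mathcal{D}_b$ from some vertex of the entire state $S$ to some vertex of the entire state $S'$ (with the convention that a single vertex is reachable from itself, covering $S=S'$). Take such a path $v_1,\dots,v_n$ with $v_1$ in the entire state $S$ and $v_n$ in the entire state $S'$. Every $v_i$ is a vertex of $\mathcal{D}_b$, hence a vertex of $\mathcal{D}_p$; every arc $(v_i,v_{i+1})$ lies in $E(\mathcal{D}_b) \subseteq E(\mathcal{D}_p)$. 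Therefore $v_1,\dots,v_n$ is a path in $\mathcal{D}_p$ as well, so $v_n$ (a vertex of the entire state $S'$) is reachable from $v_1$ (a vertex of the entire state $S$) in $\mathcal{D}_p$, which is exactly the assertion that $S'$ is reachable from $S$ in $\mathcal{D}_p$.

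The proof is essentially a one-line unfolding of definitions, so there is no real obstacle; the only thing to be careful about is the bookkeeping of what ``reachable'' means at the level of entire states versus individual vertices, and the (trivial but worth stating) point that $b \subseteq C(\{p\})$ so that the $b$-vertices of $\mathcal{D}$ genuinely survive into $\mathcal{D}_p$. I would keep the write-up to two or three sentences, citing the induced-subdigraph observation made just before the proposition and not belaboring the path-lifting argument.
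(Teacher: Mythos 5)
Your proof is correct and is essentially the same argument as the paper's: both rest on the observation that $\mathcal{D}_b$ is an induced subdigraph of $\mathcal{D}_p$ when $p \in b$, so a path witnessing reachability in $\mathcal{D}_b$ is literally a path in $\mathcal{D}_p$. You simply spell out the bookkeeping (that $b \subseteq C(\{p\})$ and what reachability of entire states means) more explicitly than the paper does.
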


\begin{proof}
    Since $S'$ is reachable from $S$ in $\mathcal{D}_b$ there is some path from vertex $(S,q)$ to $(S,q')$ for points $q,q' \in b$. As $\mathcal{D}_b$ is an induced digraph of $\mathcal{D}_p$, the same path exists in $\mathcal{D}_p.$
\end{proof}

On the other hand, the digraph $\mathcal{D}_p$ contains arcs that are not in $\mathcal{D}_b$ for any block $b$ containing $p$. For example, the arc $(4,B),(4,c)$ of $\mathcal{D}_C$ is not in any digraph $\mathcal{D}_b$. It is precisely these ``jumps'' from context to context that characterize the difference between predictions (Ib) and (Ia).
As in Sec.~\ref{sec:digraphs-Db} the strongly connected sinks turn out to be important. However the extra ``jumps'' make the situation slightly more subtle.

\begin{proposition} \label{prop:Dp-state-in-sink}
    Let $S \in \mathcal{S},\
    p \in P$. Let $\Sigma$ be a strongly connected sink of $\mathcal{D}_p$ with $S \in \Sigma$.
    If $|\Sigma| \leq 2$, then $(S,b)$ is a confirmation context for each block $b \in B$ containing $p$. If $|\Sigma| = 1$, then $(S,b)$ is a simple context for each block $b \in B$ containing $p$.
\end{proposition}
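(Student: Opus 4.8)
The plan is to mimic the structure of the proof of Proposition~\ref{prop:Db-state-in-sink} (the analogue for $\mathcal{D}_b$), but being careful about the extra ``context-to-context jumps'' present in $\mathcal{D}_p$ that are absent in $\mathcal{D}_b$. Fix a block $b \in B$ containing $p$. The key observation is that, since $\Sigma$ is a strongly connected sink of $\mathcal{D}_p$ containing the entire state $S$, \emph{all} vertices $(S,q)$ with $q \in b$ lie in $\Sigma$ (they lie in $\mathcal{D}_p$ because $q$ is compatible with $p$, being in a common block, and they lie in $\Sigma$ because $\Sigma$ contains the entire state $S$ by Proposition~\ref{prop:sink-entire-state}). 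Moreover, $\Sigma$ restricted to the vertex set of $\mathcal{D}_b$ is itself a sink in $\mathcal{D}_b$: no arc of $\mathcal{D}_b$ leaves $\Sigma$ since $\mathcal{D}_b$ is an induced subdigraph of $\mathcal{D}_p$ and $\Sigma$ is a sink in the larger graph.

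First I would handle the case $|\Sigma| = 1$, i.e. $\Sigma = \{S\}$. Then every out-neighbor in $\mathcal{D}_p$ of every vertex $(S,q)$, $q \in b$, lies in $\Sigma$, hence in the entire state $S$; so $\Upsilon(S,q) = S$ for all $q \in b$, which is exactly the statement that $(S,b)$ is a simple context. This immediately gives that $(S,b)$ is also a confirmation context: a simple context trivially admits a (constant) walk visiting all its vertices, so predictions (Ia) and (II) force the sign to be correct — or one can just cite Proposition~\ref{prop:Db-state-in-sink} applied to the sink $\{S\}$ of $\mathcal{D}_b$.

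Next I would treat $|\Sigma| = 2$, say $\Sigma$ consists of the entire states $S$ and $S'$. Here I want to produce, inside $\mathcal{D}_b$, a closed walk that visits every vertex $(S,q)$ with $q\in b$; then predictions (Ia) and (II) pin down all the outputs $\Omega(S,q)$ and force $(S,b)$ to be a confirmation context, as in Proposition~\ref{prop:Db-state-in-sink}. Since $\Sigma$ is strongly connected in $\mathcal{D}_p$, for any two points $q,q' \in b$ there is a path in $\mathcal{D}_p$ from $(S,q)$ to $(S,q')$; but such a path may temporarily leave the $S$-part \emph{and} may use ``jump'' arcs outside $\mathcal{D}_b$, so it does not directly give a walk in $\mathcal{D}_b$. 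The fix: because $\Sigma$ contains only the two states $S, S'$, and because $\Sigma$ is also a sink when restricted to $\mathcal{D}_b$ (so it is a \emph{union of sinks} of $\mathcal{D}_b$, each of order $1$ or $2$ by Proposition~\ref{prop:sink-entire-state}), I argue that the $\mathcal{D}_b$-sinks inside $\Sigma$ must between them cover all of $\Sigma$'s vertices, and strong connectivity of $\Sigma$ in $\mathcal{D}_p$ forces there to be a single $\mathcal{D}_b$-sink of order $2$ equal to all of $\Sigma$ (if $\mathcal{D}_b$ had two disjoint order-$1$ sinks $\{S\},\{S'\}$ inside $\Sigma$, then within $\mathcal{D}_b$ the state $S$ would be a full sink, making $(S,b)$ simple, handled as above; the remaining possibility is exactly an order-$2$ sink). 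An order-$2$ sink of $\mathcal{D}_b$ containing $S$ is strongly connected in $\mathcal{D}_b$, so it contains a closed walk through every vertex $(S,q)$, $q\in b$, and Proposition~\ref{prop:Db-state-in-sink} (or a direct repeat of its argument) then gives that $(S,b)$ is a confirmation context.

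The main obstacle I anticipate is exactly this last case-analysis: ruling out the scenario in which a path realizing the strong connectivity of $\Sigma$ in $\mathcal{D}_p$ genuinely needs the jump arcs and so does not descend to a usable walk in $\mathcal{D}_b$. The clean way around it is to avoid chasing individual paths and instead reason at the level of \emph{sinks}: $\Sigma$, intersected with $\mathcal{D}_b$, is a down-closed set of vertices, hence a union of strongly connected sinks of $\mathcal{D}_b$, each of order $\le |\Sigma| \le 2$; one then shows that whichever of these sinks contains $S$ already forces $(S,b)$ to be confirmation (order $2$) or simple (order $1$), using Proposition~\ref{prop:Db-state-in-sink}. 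The hypothesis $|\Sigma|\le 2$ is used precisely to keep the order of this $\mathcal{D}_b$-sink at most $2$, so that Proposition~\ref{prop:Db-state-in-sink}'s conclusion (which needs order exactly $1$ for simplicity, and any order for confirmation) applies in the form we want; for $|\Sigma|\ge 3$ the conclusion genuinely can fail, which is why the bound appears in the statement.
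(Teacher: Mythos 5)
Your handling of $|\Sigma|=1$ is correct and matches the paper. The gap is in the $|\Sigma|=2$ case, and it sits exactly where you anticipated trouble. The step ``$\Sigma$ intersected with $V(\mathcal{D}_b)$ is out-closed, hence a union of strongly connected sinks of $\mathcal{D}_b$'' is false: an out-closed vertex set merely \emph{contains} at least one sink, it need not be covered by sinks. Concretely, nothing you have said rules out the configuration in which every $\mathcal{D}_b$-arc leaving an $S$-vertex lands in $S'$, while the return from $S'$ to $S$ (needed for $\Sigma$ to be strongly connected in $\mathcal{D}_p$) is realized only by jump arcs of $\mathcal{D}_p$ that lie outside $\mathcal{D}_b$. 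Then the only $\mathcal{D}_b$-sink inside $\Sigma$ is $\{S'\}$, the state $S$ lies in no sink of $\mathcal{D}_b$ at all, and there is no closed walk in $\mathcal{D}_b$ through the vertices $(S,q)$, $q\in b$. Your two-case analysis (two order-$1$ sinks versus one order-$2$ sink) omits this third case, so the strategy of producing a closed walk and invoking Proposition~\ref{prop:Db-state-in-sink} cannot be completed.

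The conclusion does still hold in the missed case, but it has to be reached the way the paper does it, via the contrapositive of Proposition~\ref{prop:Db-2sink} rather than via a closed walk: if $(S,b)$ were a contradiction context, there would be two distinct strongly connected sinks of $\mathcal{D}_b$ reachable from $S$, neither containing $S$; since sinks are disjoint unions of entire states, at least two states other than $S$ would be reachable from $S$ in $\mathcal{D}_b$, hence (by Proposition~\ref{prop:Dp-reachability}) in $\mathcal{D}_p$ as well, contradicting the fact that $\Sigma$ is a sink of $\mathcal{D}_p$ containing $S$ with $|\Sigma|\le 2$. This reachability-counting argument covers all three cases uniformly and is the missing ingredient; with it, the detour through sinks of $\mathcal{D}_b$ containing $S$ becomes unnecessary.
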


\begin{proof}
    First assume that $|\Sigma| \leq 2$. Let $b \in B$ with $p \in b$. Since $S \in \Sigma$, there is at most one other state reachable from $S$ in $\mathcal{D}_p$. By Proposition~\ref{prop:Dp-reachability}, this is also the case in $\mathcal{D}_b$. Therefore, by Proposition~\ref{prop:Db-2sink}, $(S,b)$ cannot be a contradiction context, and is thus a confirmation context.

    Second assume that $\Sigma = \{S\}$ so that $|\Sigma| = 1$. Then, no other state is reachable from $S$ in $\mathcal{D}_p$ (and thus $\mathcal{D}_b$). Therefore, $(S,b)$ is simple.
\end{proof}

As in the case of $\mathcal{D}_b$, we will refer to a strongly connected sink of order $1$ as a simple sink.

For example, the digraph $\mathcal{D}_C$ (Fig.~\ref{fig:square-digraphs2}) has two strongly connected sinks: $\{S_2\},\{S_3\}$. Therefore, the two blocks containing $C$, $b_1 = \{A,B,C\}, b_2 = \{C,c,\gamma\}$, are both simple for both of the states $S_2,S_3$ (i.e., $(S_2,b_1), (S_2,b_2),(S_3,b_1), (S_3,b_2)$ are all simple contexts).

We now prove results for Mealy machines satisfying each of the predictions (Ia), (Ib), and (II). That is,  $\mathcal{M} = (\mathcal{I}, \mathcal{O}, \mathcal{S}, \Omega, \Upsilon)$ now represents a Mealy machine satisfying predictions (Ia), (Ib), and (II) for some set of observables with underlying incidence structure $H=(P,B)$.

\begin{proposition} \label{prop:sink-homogeneous-p}
    Let $p \in P$, and let $\Sigma$ be a strongly connected sink of $\mathcal{D}_p$. For each pair of states $S,S' \in \Sigma$, $\Omega(S,p) = \Omega(S',p)$.
\end{proposition}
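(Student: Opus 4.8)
The plan is to reduce the statement to a single application of prediction~(Ib), using only the vertex structure of $\mathcal{D}_p$ together with the fact that a strongly connected sink is a union of entire states (Proposition~\ref{prop:sink-entire-state}).

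First I would record the two ingredients I need about $\Sigma$. Since $p \in C(\{p\})$, the vertex $(S'',p)$ lies in $\mathcal{D}_p$ for every state $S''$; and since $\Sigma$ is a union of entire states by Proposition~\ref{prop:sink-entire-state}, the hypothesis $S,S' \in \Sigma$ gives that both $(S,p)$ and $(S',p)$ are vertices of $\Sigma$. Because $\Sigma$ is strongly connected, there is a walk $Q = v_1,\dots,v_n$ inside $\Sigma$ (hence inside $\mathcal{D}_p$) with $v_1 = (S,p)$ and $v_n = (S',p)$; writing $v_i = (S_{a_i},q_i)$, we have $q_1 = q_n = p$ and, crucially, $q_i \in C(\{p\})$ for all $i$, since every vertex of $\mathcal{D}_p$ has its point in $C(\{p\})$.

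Next I would translate the walk back into a measurement sequence. Feeding the inputs $q_1,q_2,\dots,q_n$ to the Mealy machine starting in state $S = S_{a_1}$ makes it pass through the states $S_{a_1},\dots,S_{a_n}$ — this is exactly what the arcs of $\mathcal{D}$ encode — producing the output string $\Omega(S_{a_1},q_1),\dots,\Omega(S_{a_n},q_n)$. This sequence measures $p$ first, then a list of observables $q_2,\dots,q_{n-1}$ each jointly measurable with $p$, and then $p$ again. Prediction~(Ib) therefore forces the first and last outputs to coincide, i.e.\ $\Omega(S,p) = \Omega(S_{a_1},q_1) = \Omega(S_{a_n},q_n) = \Omega(S',p)$, which is the claim (the case $S = S'$ being trivial).

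I do not expect a genuine obstacle; the only care needed is in the bookkeeping between walks in $\mathcal{D}_p$ and admissible measurement sequences — noting that the arc condition $q_i \neq q_{i+1}$ is harmless (consecutive measurements being distinct does not interfere with~(Ib)), that $p$ may legitimately reappear among the intermediate $q_i$ since $p$ is compatible with itself, and that the path can be taken within the induced subdigraph $\Sigma$, so it never visits a point outside $C(\{p\})$.
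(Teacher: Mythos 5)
Your proposal is correct and follows essentially the same route as the paper's proof: both extract a path from $(S,p)$ to $(S',p)$ inside the strongly connected sink and then invoke prediction~(Ib) to equate the first and last outputs. The extra bookkeeping you supply (that $(S,p)$ and $(S',p)$ are genuinely vertices of $\Sigma$ via Proposition~\ref{prop:sink-entire-state}, and that all intermediate points lie in $C(\{p\})$) is left implicit in the paper but is exactly the right justification.
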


\begin{proof}
    Since $\Sigma$ is a strongly connected component, there is a path from $(S,p)$ to $(S',p)$. Therefore, since the Mealy machine satisfies prediction (Ib), $\Omega(S,p) = \Omega(S',p)$.
\end{proof}

Note that unlike in the $\mathcal{D}_b$ case, this only holds for the point $p$, so that strongly connected sinks do \emph{not} define the outputs of all of the points.

If, for some point $p \in P$, the digraph $\mathcal{D}_p$ has at least two strongly connected sinks, we say that $p$ is a \emph{multi-sink} point.

\begin{proposition} \label{prop:Dp-2sink}
    Let $(S,b)$ be a contradiction context for some $b \in B, S \in \mathcal{S}$. Then, there are, at least, two multi-sink points $p,p' \in b$ such that $S$ is not in a sink of $\mathcal{D}_p$ nor $\mathcal{D}_{p'}$.  
\end{proposition}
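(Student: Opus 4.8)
The plan is to combine Proposition~\ref{prop:Db-2nonsimple} (every contradiction context has at least two nonsimple vertices) with the analysis of the digraphs $\mathcal{D}_p$ and $\mathcal{D}_b$, using the fact that we now have prediction (Ib) available. Let $(S,b)$ be a contradiction context, and let $(S,p),(S,p')$ be two distinct nonsimple vertices guaranteed by Proposition~\ref{prop:Db-2nonsimple}. I claim that $p$ and $p'$ are the required multi-sink points, and that $S$ lies in no strongly connected sink of $\mathcal{D}_p$ or $\mathcal{D}_{p'}$.

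First I would show that $S$ is in no strongly connected sink of $\mathcal{D}_p$ (and symmetrically $\mathcal{D}_{p'}$). Since $(S,b)$ is a contradiction context, Proposition~\ref{prop:Db-2sink} tells us there are at least two strongly connected sinks $\Sigma_1,\Sigma_2$ of $\mathcal{D}_b$ reachable from $S$ with $\Omega_{\Sigma_1}(b)\neq\Omega_{\Sigma_2}(b)$, and in particular $S$ is in no sink of $\mathcal{D}_b$. The two distinct sinks correspond to (at least) two distinct states reachable from $S$ in $\mathcal{D}_b$; by Proposition~\ref{prop:Dp-reachability}, those states are also reachable from $S$ in $\mathcal{D}_p$. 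If $S$ were in a strongly connected sink $\Sigma$ of $\mathcal{D}_p$, then every state reachable from $S$ in $\mathcal{D}_p$ would already be in $\Sigma$, so $\Sigma$ contains all of those reachable states — but then Proposition~\ref{prop:sink-homogeneous-p} forces $\Omega(S',p)$ to be constant across $\Sigma$, and (Ib) together with the reachability of the two sink-states of $\mathcal{D}_b$ would force $\Omega_{\Sigma_1}(p)=\Omega_{\Sigma_2}(p)=\Omega(S,p)$. Iterating this over all points of $b$ gives $\Omega_{\Sigma_1}(b)=\Omega_{\Sigma_2}(b)$, contradicting Proposition~\ref{prop:Db-2sink}. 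Hence $S$ is in no sink of $\mathcal{D}_p$.

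Next I would argue that $p$ is a multi-sink point, i.e. $\mathcal{D}_p$ has at least two strongly connected sinks. Since $(S,p)$ is nonsimple, $\Upsilon(S,p)=S_k\neq S$, so the vertex $(S_k,p)$ is reachable from $(S,p)$ in $\mathcal{D}_p$; thus the entire state $S_k$ is reachable from $S$ in $\mathcal{D}_p$. Now suppose for contradiction that $\mathcal{D}_p$ had a unique strongly connected sink $\Sigma$. Every vertex can reach some sink, and with a unique sink $\Sigma$ is reachable from $S$; moreover by Proposition~\ref{prop:sink-homogeneous-p} and (Ib), any state reachable from $S$ in $\mathcal{D}_p$ — in particular any state in $\Sigma$ and also $S$ itself — must share the same output on $p$, so in fact $\Omega(\cdot,p)$ is constant on all states reachable from $S$. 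But a unique sink reachable from $S$, together with $S\notin\Sigma$ (just established), still leaves open the possibility of a single sink; to close this, I use that the two $\mathcal{D}_b$-sinks $\Sigma_1,\Sigma_2$ reachable from $S$ push forward to sinks of $\mathcal{D}_p$ via the condensation (a sink of $\mathcal{D}_b$ need not remain a sink in $\mathcal{D}_p$, so here is the subtle point) — more carefully, $\Sigma_1$ and $\Sigma_2$ are reachable in $\mathcal{D}_p$, each reaches a sink of $\mathcal{D}_p$, and if that sink were unique and equal to $\Sigma$, then by Proposition~\ref{prop:sink-homogeneous-p} applied in $\mathcal{D}_p$ the value $\Omega_{\Sigma}(p)$ would equal $\Omega_{\Sigma_1}(p)$ and $\Omega_{\Sigma_2}(p)$ (using that within $\Sigma_1$, resp. $\Sigma_2$, the output on $p$ is already constant by Proposition~\ref{prop:sink-homogeneous-p} applied to these states, and (Ib) propagates it along the $\mathcal{D}_p$-path into $\Sigma$), and likewise equal $\Omega(S,p)$. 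Running this simultaneously over every $p\in b$ yields $\Omega_{\Sigma_1}(b)=\Omega_{\Sigma_2}(b)$, the desired contradiction. So $\mathcal{D}_p$ has at least two strongly connected sinks, i.e. $p$ is multi-sink; the identical argument applies to $p'$.

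The main obstacle I anticipate is exactly the asymmetry flagged in the paragraph preceding Proposition~\ref{prop:Dp-state-in-sink}: a strongly connected sink of $\mathcal{D}_b$ is \emph{not} automatically a strongly connected sink of $\mathcal{D}_p$, because the extra ``context-jumping'' arcs in $\mathcal{D}_p$ may let one escape it. So the bookkeeping connecting the two $\mathcal{D}_b$-sinks to distinct $\mathcal{D}_p$-sinks needs care; the cleanest route is probably to avoid claiming the $\mathcal{D}_b$-sinks survive and instead argue purely by the output-constancy consequences of (Ib) (Proposition~\ref{prop:sink-homogeneous-p}) together with the reachability transfer (Proposition~\ref{prop:Dp-reachability}), deriving a contradiction with $\Omega_{\Sigma_1}(b)\neq\Omega_{\Sigma_2}(b)$ whenever $\mathcal{D}_p$ fails to have two sinks or whenever $S$ sits in a sink. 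The nonsimplicity of $(S,p)$ and $(S,p')$ from Proposition~\ref{prop:Db-2nonsimple} is what pins down \emph{which} two points of $b$ work.
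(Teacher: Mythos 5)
There is a genuine gap, and it sits precisely where you write ``Iterating this over all points of $b$ gives $\Omega_{\Sigma_1}(b)=\Omega_{\Sigma_2}(b)$'' (and again at ``Running this simultaneously over every $p\in b$''). Your hypothesis for contradiction concerns a \emph{single} point $p$ (that $S$ lies in a sink of $\mathcal{D}_p$, or that $\mathcal{D}_p$ has a unique sink reachable from $S$), and from it you correctly deduce only $\Omega_{\Sigma_1}(p)=\Omega_{\Sigma_2}(p)$ for that one point. That is not yet a contradiction: Proposition~\ref{prop:Db-2sink} gives $\Omega_{\Sigma_1}(b)\neq\Omega_{\Sigma_2}(b)$, and the two sinks are free to agree at $p$ while disagreeing at other points of $b$. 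To ``iterate'' you would need the same hypothesis at every point of $b$, which is not what you are assuming. Relatedly, your choice of $p,p'$ as the two nonsimple vertices from Proposition~\ref{prop:Db-2nonsimple} is never actually used to produce the contradiction, and the claim that those particular points are the ones that work is not established: the points that must work are those on which $\Sigma_1$ and $\Sigma_2$ disagree, and your argument never ties nonsimplicity of $(S,p)$ to $p$ being such a disagreement point.

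The missing idea is a small counting step, which is how the paper closes the argument. Suppose, for contradiction, that at most one point $r\in b$ is a multi-sink point with $S$ in no sink. Then for every $q\in b\setminus\{r\}$ there is a unique sink of $\mathcal{D}_q$ reachable from $S$, and your (correct) single-point deduction gives $\Omega_{\Sigma'}(q)=\Omega(S,q)$ for every sink $\Sigma'$ of $\mathcal{D}_b$ reachable from $S$; hence the two sinks $\Sigma_1,\Sigma_2$ of Proposition~\ref{prop:Db-2sink} agree everywhere on $b$ except possibly at $r$. But two $\pm1$ assignments on $b$ with the same product that differ at all must differ in at least two coordinates, so in fact $\Omega_{\Sigma_1}(b)=\Omega_{\Sigma_2}(b)$, a contradiction. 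The ingredients you assembled (reachability transfer via Proposition~\ref{prop:Dp-reachability} and output constancy via prediction (Ib) and Proposition~\ref{prop:sink-homogeneous-p}) are sound and are exactly what feeds into that counting step; what is missing is negating the conclusion for \emph{all but one} point of $b$ simultaneously rather than for one point at a time.
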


\begin{proof}
    Let $r \in b$, and assume towards a contradiction that for each $q \in b \setminus \{r\}$ that either $q$ is not a multi-sink point, or $S$ is in a sink of $\mathcal{D}_q$. In both of these scenarios, there is a unique sink $\Sigma$ reachable from $S$ in $\mathcal{D}_q$. In particular, $\Omega(S',q)$ must be the same for any vertex $(S',q)$ reachable from $S$. 

    Since $\mathcal{D}_b$ is an induced subdigraph of each of these digraphs, it follows that in each sink $\Sigma'$ reachable from $S$ in $\mathcal{D}_b$ we have $\Omega_{\Sigma'}(q) = \Omega(S,q)$ for each $q \in b \setminus \{r\}$. By Proposition~\ref{prop:Db-2sink}, there must be at least two such sinks $\Sigma_1,\Sigma_2$ of $\mathcal{D}_b$ with $\Omega_{\Sigma_1}(b) \neq \Omega_{\Sigma_2}(b)$. But since both of these sinks has the same product, they must differ on at least two points. This is a contradiction.
\end{proof}


\section{Proofs} \label{sec:proofs}


\subsection{Memory cost of satisfying predictions (Ia) and (II) for Mermin's pentagram}


As illustrated in Fig.~\ref{fig:mermin-pentagram}, the underlying configuration is a set of five contexts, such that each pair intersects at exactly one point. Therefore, we may label the contexts via $b_1,\dots,b_5$, and the points by the pairs $\{(i,j) : 1 \leq i < j \leq 5\}$ so that each point $(i,j)$ lies on contexts $b_i$ and $b_j$, and each context $b_k$ consists of exactly the points $(i,j)$ for which $k \in \{i,j\}$. We write $ij$ as a short-hand for $(i,j)$.



To simplify notation, we write the Mealy machines for the pentagram using two-row arrays where the entries correspond to the observables as follows:
\begin{equation}
\begin{bmatrix}
    A & B & b_c & C & D \\
    b_d & a_b & a_d & c_d & a_c
\end{bmatrix}.
\end{equation}.

The following is a four-state Mealy machine that satisfies predictions (Ia) and (II): 

\begin{equation} \label{eq:pentagram-4state}
\scalebox{0.8}{$
    \begin{array}{ll}
    S_1 = 
    \begin{bmatrix}
      + & + & + & (+,2) & (+,3) \\
      + & + & + & + & + 
    \end{bmatrix},
    &
    S_2 = 
    \begin{bmatrix}
      + & + & + & + & - \\
      (-,4) & (+,1) & + & + & +
    \end{bmatrix}, \\ \\
    S_3 = 
    \begin{bmatrix}
        + & + & + & - & + \\
        (+,1) & (-,4) & + & + & -
    \end{bmatrix},
    &
    S_4 = 
    \begin{bmatrix}
        + & + & + & (-,3) & (-,2) \\
        - & - & + & + & -
    \end{bmatrix}.
    \end{array}
    $}%
\end{equation}

\begin{proposition}
    There is no three-state Mealy machine for Mermin's pentagram that satisfies predictions (Ia) and (II).
\end{proposition}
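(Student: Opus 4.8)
The strategy is a proof by contradiction using the machinery of the digraphs $\mathcal{D}_b$ developed in Sec.~\ref{sec:digraphs-Db}. Suppose $\mathcal{M}$ is a three-state Mealy machine with states $S_1, S_2, S_3$ satisfying predictions (Ia) and (II) for the pentagram. Since the contextuality degree of the pentagram is $1$, each state $S_i$ has at least one contradiction context $(S_i, b)$ among the five contexts $b_1,\dots,b_5$. The key tension to exploit is Proposition~\ref{prop:Db-2sink}: for each contradiction context $(S_i,b)$, the digraph $\mathcal{D}_b$ has at least two strongly connected sinks reachable from $S_i$ with distinct output patterns on $b$, and $S_i$ itself lies in no sink of $\mathcal{D}_b$. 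With only three states available, two distinct sinks of $\mathcal{D}_b$ reachable from $S_i$ must be the two states other than $S_i$, each forming its own simple sink (order $1$) — there is no room for a sink of order $2$ together with a third state outside it, unless $S_i$ is the only state outside, which still forces both remaining states to be singleton sinks. So for every contradiction context $(S_i,b)$, the other two states are simple sinks of $\mathcal{D}_b$, hence by Proposition~\ref{prop:Db-state-in-sink} the context $(S_j,b)$ is a \emph{simple context} for both $j\neq i$.

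Next I would count contradiction contexts globally. Each context $b_k$ is a block of four points, and on each state it receives a $\pm 1$ assignment; the product must be $+1$ for four of the contexts and $-1$ for the thicker one. For a fixed state $S_i$, the number of contradiction contexts among $b_1,\dots,b_5$ equals the number of contexts whose product under $\Omega(S_i,\cdot)$ has the wrong sign, which is odd (since the pentagram's contextuality degree is $1$ and, more pointedly, the product of all five context-signs in the quantum case is $-1$, while any global $\pm1$ assignment makes the product of the five $\Omega$-signs equal to $+1$ because each point lies on exactly two contexts). Hence each state has an \emph{odd} number — so at least one, and possibly three or five — of contradiction contexts. In particular there are at least three contradiction contexts total, counted with multiplicity over states.

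Now the counting argument: if $(S_i, b)$ is a contradiction context, then $(S_j,b)$ is simple — in particular \emph{not} a contradiction context — for the two states $j \neq i$. Thus for any fixed block $b$, at most one of the three states can have $(S_\cdot, b)$ as a contradiction context. Summing over the five blocks, the total number of (state, block) contradiction contexts is at most $5$. On the other hand, each of the three states contributes an odd number $\geq 1$ of contradiction contexts, and I would argue this number is in fact $\geq 3$: Proposition~\ref{prop:Db-2nonsimple} says a contradiction context has at least two nonsimple vertices, and having a contradiction context forces $S_i$ to be outside every sink of the corresponding $\mathcal{D}_b$; combined with the fact that the two other states are simple sinks, one can chase reachability to show the nonsimple vertices of $S_i$ must transition to \emph{both} of the other states (to reach both sinks), which propagates structure forcing further contradiction contexts at $S_i$. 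The cleaner route is simply: each state has an odd number of contradiction contexts, so each has $\geq 1$; but if some state had exactly one while the configuration of simple contexts forced on the other two states is globally inconsistent with \emph{their} required odd counts, we reach a contradiction. Concretely, the three odd numbers $n_1, n_2, n_3 \geq 1$ sum to at most $5$, forcing $(n_1,n_2,n_3) = (1,1,1)$ (up to permutation) or $(3,1,1)$. In the $(1,1,1)$ case the three contradiction contexts lie on three distinct blocks $b,b',b''$; on each of the remaining two blocks, \emph{all three} states have simple contexts. I would then derive a contradiction by noting that a block all of whose three contexts are simple means $\Omega(S,p)$ is independent of $S$ for every point $p$ of that block — but the pentagram's blocks pairwise share points, so the two (or more) all-simple blocks force the outputs to be state-independent on a large set of points, and tracking these forced equalities around the pentagon shows \emph{every} point has state-independent output, whence all three states carry the identical global assignment and hence the identical (odd, $\geq 1$) set of contradiction blocks — contradicting that at most one state is a contradiction context per block. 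The $(3,1,1)$ case is handled similarly: the two states with a single contradiction context force, between them, four all-simple blocks, again collapsing outputs to be state-independent.

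\textbf{Main obstacle.} The delicate step is the last one: converting "context $(S,b)$ is simple for all states $S$" into genuine state-independence of outputs and then propagating it around the pentagram's intersection pattern to force all three states to coincide. One must be careful that "simple context" only directly says the transition function fixes the state on those vertices, and argue cleanly that this, together with (Ia) and the reachability structure, pins down the outputs; and one must verify the propagation actually reaches every one of the ten points given which blocks are all-simple. I expect this combinatorial bookkeeping — rather than any conceptual difficulty — to be where the real work lies.
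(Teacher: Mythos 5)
Your opening moves match the paper's: with only three states, Proposition~\ref{prop:Db-2sink} forces the two states other than $S_i$ to be singleton sinks of $\mathcal{D}_b$ whenever $(S_i,b)$ is a contradiction context, so $(S_j,b)$ is simple for $j\neq i$, and each block carries a contradiction context for at most one state. Your parity observation (each point lies on exactly two contexts, so each state has an odd number of contradiction contexts) is correct but not needed: one contradiction context per state, on three necessarily distinct blocks, already suffices.

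The proof breaks down at the endgame, in two places. First, the claim that in the $(1,1,1)$ case the two remaining blocks are ``all-simple'' is unjustified and in fact false --- the opposite holds. Labelling the contexts $b_1,\dots,b_5$ and the points by pairs $ij$ as in the paper, with $(S_i,b_i)$ the contradiction context of state $S_i$ for $i=1,2,3$, the seven points lying on $b_j$ for $j\in\{1,2,3\}\setminus\{i\}$ are simple in state $S_i$; the two nonsimple vertices that $(S_i,b_i)$ must possess by Proposition~\ref{prop:Db-2nonsimple} are therefore forced to sit at the points $i4$ and $i5$. Hence $b_4$ (and likewise $b_5$) contains a nonsimple vertex in \emph{every} state. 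Being a confirmation context does not make a context simple, which is the inference your argument tacitly relies on. Second, even where a block is all-simple, your lemma ``all three contexts of $b$ simple $\Rightarrow \Omega(S,p)$ is independent of $S$ on $b$'' does not hold: simplicity constrains only the transition function, and without prediction (Ib) nothing prevents two states from assigning different sign patterns to the four points of $b$ as long as each has the correct product. The state-independence you are after is real for $b_4$ and $b_5$, but it follows from the opposite structural fact: since no state can be a singleton sink of $\mathcal{D}_{b_4}$, that digraph has a single strongly connected sink, and every vertex that reaches it must agree with its outputs. The paper then concludes by exhibiting a walk through the contradiction context $(S_1,b_1)$ --- measuring $12,13,14,15$ starting in $S_1$ --- whose product of outputs equals the wrong sign because the outputs of $14$ and $15$ cannot change when the state does, violating prediction (II). As written, your propagation-to-global-state-independence step has no valid starting point, so the argument does not close.
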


\begin{proof}
    Assume towards a contradiction that there does exist such a machine, and let $S_1,S_2,S_3$ denote the the three states. 

The first state $S_1$ has some contradiction context $(S_1,b_1)$. By Proposition~\ref{prop:Db-2sink}, $\mathcal{D}_{b_1}$ has two strongly connected sinks neither of which contains $S_1$. Therefore, these sinks must be exactly $\{S_2\},\{S_3\}$. By Proposition~\ref{prop:Db-state-in-sink}, $(S_2,b_1)$ and $(S_3,b_1)$ are simple contexts, and therefore confirmation contexts.

Making the same argument for $S_2$ and $S_3$ in lieu of $S_1$, we find that $(S_1,b_1), (S_2, b_2), (S_3,b_3)$ must be contradiction contexts (without loss of generality). 

For each $i$, the contexts $\{(S_i,b_j) : 1 \leq j \leq 3, j \neq i\}$ are simple. This constitutes $7$ simple vertices for each state. For each state $S_i$ the remaining three vertices are $i4, i5, 45$.
Since each state $S_i$ has at least two nonsimple vertices (by Proposition~\ref{prop:Db-2nonsimple}), there is some nonsimple vertex $(S_i,p)$ with $p \in b_4$ and some nonsimple vertex $(S_i,p')$ with $p' \in b_5$ for each $i=1,2,3$.
 
The block $b_4$ contains nonsimple vertices for each of the states $S_1,S_2,S_3$, and so none of these states can be in a strongly connected sink by itself. Therefore, $\mathcal{D}_{b_4}$ has a single strongly connected sink $\Sigma$, and $\Omega(S_i,p) = \Omega_{\Sigma}(p)$ for each $p \in b_4$ and $i=1,2,3$. The same holds true for $b_5$.  

Examining $(S_1,b_1)$, we find that the vertices $(S_1,12),(S_1,13)$ are simple (since $b_2,b_3$ are simple contexts), and that the points $14,15$ have the same output for each state. This is a contradiction since the sequence of measurements $(S_1,12), (S_1,13), (S_1,14), (S_j,15)$ [where $S_j = \Upsilon(S_1, 15)$] will have the same product of outputs as $(S_1,12), (S_1,13), (S_1,14), (S_1,15)$. But since this is a contradiction context, this means that the Mealy machine does not satisfy prediction (II).
\end{proof}


\subsection{Memory cost of satisfying predictions (Ia), (Ib), and (II) for Mermin's pentagram} \label{sec:pentagram-iabii}



The following is a five-state Mealy machine satisfying predictions (Ia), (Ib), and (II) for Mermin's pentagram: 

\begin{equation} \label{eq:pentagram-5state}
\scalebox{0.8}{$
\begin{array}{ll}
S_1 = 
\begin{bmatrix}
    (+,4) & + & + & + & (+,3) \\
    + & + & + & + & +
\end{bmatrix}, &
S_2 = 
\begin{bmatrix}
    (+,4) & + & - & + & - \\
    - & + & + & + & (-,5)
\end{bmatrix}, \\ \\
S_3 = 
\begin{bmatrix}
    - & + & (+,1) & + & + \\
    + & + & + & + & (-,5)
\end{bmatrix}, 
&
S_4 = 
\begin{bmatrix}
    + & + & (+,1) & + & - \\ 
    (-,2) & + & + & + & +
\end{bmatrix}, \\ \\
S_5 = 
\begin{bmatrix}
    - & + & - & + & (+,3) \\
    (-,2) & + & + & + & -
\end{bmatrix}.
\end{array} 
$}%
\end{equation}

The remainder of this section is dedicated to proving that this automaton is optimal, thus showing that the memory cost is $\log_2(5)$ bits. Throughout, we assume towards a contradiction that
$\mathcal{M}$ is a four-state Mealy machine for Mermin's pentagram that satisfies predictions (Ia), (Ib), and (II). 

\begin{proposition} \label{prop:star-one-sink}
    Let $p,p'$ be distinct points. No state can be in a simple sink in both $\mathcal{D}_p$ and $\mathcal{D}_{p'}$.
\end{proposition}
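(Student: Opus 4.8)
The plan is to argue by a vertex-counting contradiction: a state sitting in a simple sink of two different commuting digraphs would be so rigid (too many of its vertices forced to be simple) that it could not accommodate the contradiction context it is obliged to have.

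First I would unpack what membership in a simple sink gives. If $S$ lies in a simple sink of $\mathcal{D}_p$, then by Proposition~\ref{prop:Dp-state-in-sink} (the $|\Sigma|=1$ case) the context $(S,b)$ is simple for every block $b$ through $p$; since the pentagram is a configuration in which each point lies on exactly two blocks, this pins down two simple blocks for $S$. Doing the same with $p'$ produces two more simple blocks. Hence every vertex $(S,q)$ whose point $q$ lies on a block through $p$ or through $p'$ is a simple vertex.

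Next comes the combinatorial heart: counting how many of the ten points of the pentagram are left uncovered by these blocks. Because distinct blocks of the pentagram meet in exactly one point, $p\neq p'$ forces the two blocks through $p$ and the two through $p'$ to be at least three distinct blocks — exactly three when $p,p'$ are collinear, four otherwise. Three blocks of the pentagram cover at least nine of its ten points (and, being pairwise non-concurrent in the collinear case since otherwise $p=p'$, cover exactly nine), while four blocks cover all ten. So in either case $S$ has at most one nonsimple vertex.

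Finally I would close the argument: the contextuality degree of Mermin's pentagram is $1$, so the $\pm1$ assignment $q\mapsto\Omega(S,q)$ leaves at least one context of $S$ a contradiction context, and by Proposition~\ref{prop:Db-2nonsimple} such a context must contain at least two nonsimple vertices — contradicting the bound from the previous paragraph. The only slightly delicate point (not a real obstacle) is the block-intersection bookkeeping in the collinear case, i.e.\ checking that exactly one point escapes the three blocks, together with being careful to invoke only the statement ``simple sink in $\mathcal{D}_p$ $\Rightarrow$ simple blocks through $p$'' rather than any stronger claim about all points compatible with $p$. Note that the argument does not use the four-state assumption; it works for any number of states.
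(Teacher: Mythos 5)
Your proof is correct and follows essentially the same route as the paper's: both arguments use Proposition~\ref{prop:Dp-state-in-sink} to force all blocks through $p$ and $p'$ to be simple contexts, count that these (at least three) blocks cover at least nine of the pentagram's ten points, and derive a contradiction with the requirement that every state have at least two nonsimple vertices. The only differences are cosmetic: you spell out the collinear/non-collinear case split and explicitly trace the ``two nonsimple vertices'' requirement back to the contextuality degree and Proposition~\ref{prop:Db-2nonsimple}, details the paper leaves implicit.
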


\begin{proof}
    Assume towards a contradiction that a state $S$ is in a simple sink in both $\mathcal{D}_p$ and $\mathcal{D}_{p'}$.
    The points $p$ and $p'$ are incident to, at least, three blocks (say $b_1,b_2,b_3$ without loss of generality) of Mermin's pentagram, which in turn are incident to nine of its ten points. 
    By Proposition~\ref{prop:Dp-state-in-sink},
    each of the contexts $(S,b_1),(S,b_2),(S,b_3)$ is simple. Therefore, there is at most one nonsimple vertex $(S,r)$ for state $S$.
    This is a contradiction since each state must have, at least, two nonsimple vertices. 
\end{proof}

Let $X_i$ denote the set of multi-sink points for which there are exactly $i$ states that are not in any strongly connected sink, and let $x_i = |X_i|$. Since there are at least two strongly connected sinks in $\mathcal{D}_p$ for each multi-sink point $p$, there must be at least two states that are in strongly connected sinks, and so we find that $i \in \{0,1,2\}$.
Note that for $i \neq j$, the sets $X_i$ and $X_j$ must be disjoint.

By Proposition~\ref{prop:Dp-2sink}, each one of the four states $S$ must have, at least, two nonsimple points, and thus there must be, at least, two points for which $S$ is no strongly connected sink. 

Therefore,
\begin{equation} \label{ineq:star-lower}
    x_1 + 2x_2 \geq 4\cdot 2 = 8.
\end{equation}
For each multi-sink point $p \in X_1$, there must be at least one state in a simple sink of $\mathcal{D}_p$ (since there are three remaining states for two strongly connected sinks). By the same token, for each multi-sink point $p' \in X_2$ there must be exactly two simple sinks in $\mathcal{D}_{p'}$. Therefore, by Proposition~\ref{prop:star-one-sink}, it follows that
\begin{equation} \label{ineq:star-upper}
    x_1 + 2x_2 \leq 4.
\end{equation}
Clearly, one cannot satisfy both inequality~\eqref{ineq:star-lower} and inequality~\eqref{ineq:star-upper}. Therefore, it is impossible to construct a four-state Mealy machine satisfying predictions (Ia), (Ib), and (II) for Mermin's pentagram.

\subsection{Contradiction contexts of the set of all two-qubit observables} \label{sec:doily-contras}


The set of all two-qubit observables presents one significant challenge in comparison to the Peres-Mermin square and Mermin's pentagram. For both of those two sets it is possible to assign a classical hidden variable model that fails to produce the proper product for a single block. In the case of the set of all two-qubit observables, there must be at least three blocks which fail, and so there are at least three contradiction contexts per state. Moreover, it is not sufficient to consider only the case with three contradiction contexts, one needs to also consider cases with four and five contradiction contexts. This is because the four and five context cases do not contain the three context case (in the sense that the three contexts do not occur as a subset of the four or the five). Nor does the five context case contain the four context case. 

Up to symmetry there are three minimal cases to consider, each illustrated in Fig.~\ref{fig:doily-contras}. We can bound the number of nonsimple vertices needed for each state, by exploiting the fact that each contradiction context needs two nonsimple vertices (Proposition~\ref{prop:Db-2nonsimple}).

Surprisingly, while the three and four contradiction context cases (illustrated in (a) and (b) of Fig.~\ref{fig:doily-contras}, respectively) need six nonsimple vertices, the five contradiction context case [illustrated in (c)] needs only five nonsimple vertices. That is, one can choose five nonsimple vertices so that each contradiction context has two nonsimple vertices.

The following proposition immediately follows by applying Propositions~\ref{prop:Db-2nonsimple} and \ref{prop:Dp-2sink}.

\begin{proposition} \label{prop:doily-nonsimple}
    Each state $S$ of a Mealy machine $\mathcal{M}$ satisfying predictions (Ia) and (II) for the set of all two-qubit observables must have, at least, five nonsimple vertices. Moreover, if $\mathcal{M}$ satisfies prediction (Ib) as well, then there must be, at least, five multi-sink points for which $S$ is not in any strongly connected sink.
\end{proposition}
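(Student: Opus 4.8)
The plan is to establish the two claims separately, leaning on the two cited propositions. For the first claim (at least five nonsimple vertices when only (Ia) and (II) are assumed), I would argue as follows. Every $\pm 1$ assignment to the points of the doily leaves at least three blocks with the wrong product sign, so every state $S$ has at least three contradiction contexts. By Proposition~\ref{prop:Db-2nonsimple}, each such contradiction context $(S,b)$ contains at least two nonsimple vertices. The heart of the matter is then a purely combinatorial covering bound: how few points can a state mark as nonsimple while still placing two nonsimple points on each of the mandatory contradiction blocks? Because there are, up to symmetry, three minimal configurations of contradiction contexts (the three-, four-, and five-block cases of Fig.~\ref{fig:doily-contras}), I would check each one. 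In every case one needs to select a set of points meeting each contradiction block in at least two points; a direct inspection of the incidence pattern of the doily shows that four points never suffice (two disjoint or near-disjoint contradiction blocks alone force four nonsimple points, and a third block generically forces a fifth), so at least five nonsimple vertices are required. The five-block case is the delicate one — there it is actually possible to get away with exactly five, which is why the bound is five and not six — so the argument must confirm that five is simultaneously necessary in all three cases and sufficient to be tight in the five-block case.

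For the second claim (at least five multi-sink points for which $S$ is not in any strongly connected sink, when (Ib) is also assumed), I would reduce it to the first via Proposition~\ref{prop:Dp-2sink}. That proposition says each contradiction context $(S,b)$ contains at least two points $p \in b$ that are multi-sink points with $S$ lying in no strongly connected sink of $\mathcal{D}_p$. This is the exact analogue of "two nonsimple vertices per contradiction context," so the very same covering computation I just carried out applies verbatim, with the role of "nonsimple vertex on block $b$" played by "multi-sink point on block $b$ for which $S$ is not in a sink." Since the set of contradiction contexts of $S$ is the same object in both settings, and the three minimal configurations are identical, the minimum size of the required point set is again five.

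The main obstacle I anticipate is the combinatorial covering step: one must be certain that no clever choice of five (or fewer) points can meet each contradiction block in two points for the three-block and four-block configurations, i.e.\ that the minimum is genuinely five across all three cases and not four in some case. This requires a careful case analysis of the doily's line structure — each point lies on three lines, each line has three points, and lines through a common point pair up the remaining six points — and checking that the mandatory contradiction blocks in each configuration cannot be "doubly covered" more economically. The excerpt's preceding discussion already asserts the relevant counts (three and four contradiction contexts need six nonsimple vertices in the worst assignment, five contradiction contexts need five), so the proof should mostly amount to citing that analysis, applying Propositions~\ref{prop:Db-2nonsimple} and~\ref{prop:Dp-2sink}, and observing that five is the common lower bound; the rest is bookkeeping over the three cases of Fig.~\ref{fig:doily-contras}.
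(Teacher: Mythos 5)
Your proposal is correct and follows essentially the same route as the paper: the paper derives the proposition directly from the preceding discussion of the three minimal contradiction-context configurations of Fig.~\ref{fig:doily-contras} (contextuality degree three, double-covering each contradiction block, with the five-block ``pentagon'' case attaining the minimum of five), combined with Proposition~\ref{prop:Db-2nonsimple} for the nonsimple-vertex count and Proposition~\ref{prop:Dp-2sink} for the multi-sink-point count. The covering case analysis you flag as the main obstacle is exactly the content the paper places in Sec.~\ref{sec:doily-contras} rather than in the proof itself, so your plan matches the paper's argument in both structure and substance.
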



\begin{figure}
    \centering
    \begin{subfigure}{0.23\textwidth}
        \centering
        \includegraphics[width=\linewidth]{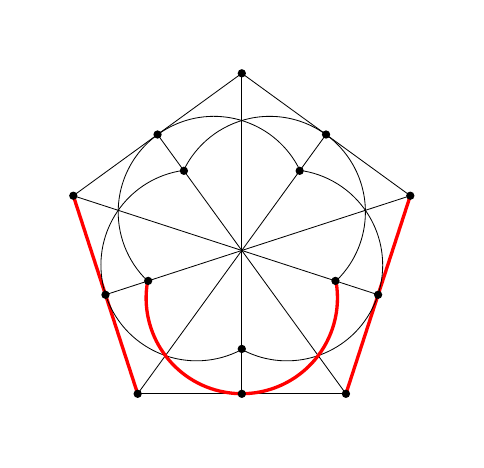}
        \caption{}
        \label{fig:doily-contras1}
    \end{subfigure}
    \begin{subfigure}{0.23\textwidth}
        \centering
        \includegraphics[width=\linewidth]{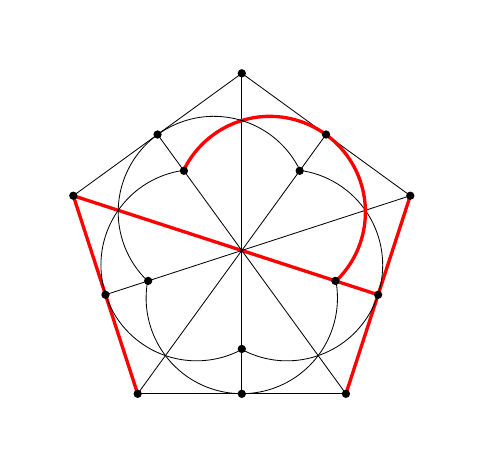}
        \caption{}
        \label{fig:doily-contras2}
    \end{subfigure}
    \begin{subfigure}{0.23\textwidth}
        \centering
        \includegraphics[width=\linewidth]{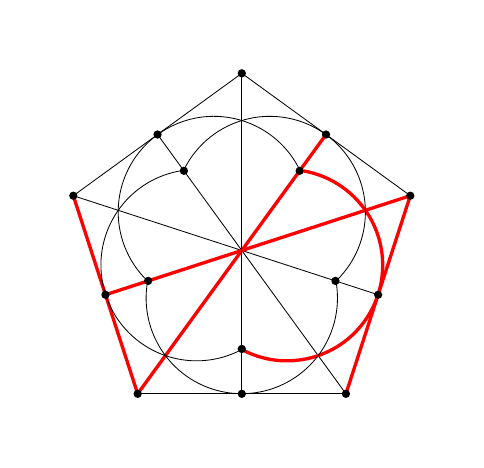}
        \caption{}
        \label{fig:doily-contras3}
    \end{subfigure}
    \caption{The minimal sets of contradiction contexts of the set of all two-qubit observables, up to symmetry (shown in red). (a) illustrates the unique set with three contradiction contexts, (b) with four, and (c) with five.}
    \label{fig:doily-contras}
\end{figure}


 It is well-known (see, e.g., \cite{Godsil:2013}) that the contexts of the set of all two-qubit observables may be labeled with pairs $\{i,j\}$ of distinct numbers from $\{1,2,\dots,6\}$, so that two contexts $\{i,j\}, \{k,\ell\}$ meet at a point if and only if $\{i,j\}$ and $\{k,\ell\}$ are disjoint sets. In such a way, each point of the set of all two-qubit observables is labeled by a partition of $\{1,2,\dots,6\}$ into three subsets of size two.

The contradiction contexts of Fig.~\ref{fig:doily-contras}(a) may then be described as simply choosing three distinct numbers $i,j,k$ from $\{1,2,\dots,6\}$ and then forming the contexts $\{i,j\},\{j,k\},\{i,k\}$. One such example is $\{1,2\},\{2,3\},\{1,3\}$.
There are $\binom{6}{3} = 20$ such choices.

The contradiction contexts of Fig.~\ref{fig:doily-contras}(b) can be described as choosing a pair $\{i,j\}$ of distinct numbers from $\{1,2,\dots,6\}$, along with a third number $k \in \{1,2,\dots,6\}$ different from $i$ or $j$. One then takes the context $\{i,j\}$ along with the  three contexts $\{k,\ell_1\}, \{k,\ell_2\}, \{k,\ell_3\}$, where $\ell_1,\ell_2,\ell_3$ are the other three numbers in $\{1,2,\dots,6\}$ distinct from $i,j,k$. One such example is $\{1,2\}, \{3,4\}, \{3,5\}, \{3,6\}$. There are $4\binom{6}{2} = 60$ such choices.

The contradiction contexts of Fig.~\ref{fig:doily-contras}(c) may be described by choosing a number $i \in \{1,2,\dots,6\}$ to remove, and then forming a ``cycle''. By forming a cycle we mean the following procedure. One first writes the remaining five elements $j_1,\dots,j_5$ of $\{1,2,\dots,6\}$ on the sides of a pentagon in any order, with two orderings being equivalent whenever one ordering can be obtained from the other by rotating or flipping the pentagon. Next, one identifies each vertex of the pentagon with the pair of sides that are incident to it. These form the $5$ contexts. An example of this is $\{1,3\},\{3,5\},\{5,4\},\{4,2\},\{2,1\}$. This is formed by first removing $6$, and then considering the ordering $1,3,5,4,2$. There are $6\binom{4}{2} = 72$ such choices. 

There is another way to view the set of all two-qubit observables that we also make use of: as a decomposition into $10$ Peres-Mermin squares. Each of these squares may be viewed as a way of partitioning the set $\{1,2,\dots,6\}$ into two sets of size three in the following way. Say we choose sets $\{1,2,4\},\{3,5,6\}$. We then form the six blocks of the grid to be the pairs $\{1,2\},\{1,4\},\{2,4\}$ from the first set along with the pairs $\{3,5\},\{3,6\},\{5,6\}$ of the second set. Together, these six contexts form a square with the pairs from the first set being viewed as the ``rows'' and the pairs from the second set being viewed as the ``columns.''
In such a way, we obtain $10$ squares, with each context appearing in exactly four squares. 

Each set of contradiction contexts must intersect one of the $10$ squares in at least three contexts. The contradiction contexts of Fig.~\ref{fig:doily-contras}(a) are exactly the $20$ choices of all rows or all columns from one of the $10$ squares. It is easy to see that the contradiction contexts of Fig.~\ref{fig:doily-contras}(b) and Fig.~\ref{fig:doily-contras}(c) both intersect some square at two rows and one column (or vice-versa).


\subsection{Memory cost of satisfying predictions (Ia) and (II) for the set of all two-qubit Pauli observables} \label{sec:doily-IaII}


An example of a 6-state solution satisfying (Ia) and (II) is the following, using the notation of \eqref{epm}:
\begin{equation} \label{eq:doily-6state}
\scalebox{0.8}{$
\begin{array}{ll}
S_1 = 
\begin{bmatrix}
         & + & + & + \\
        + & + & + & + \\
        + & (+,4) & (+,5) & (+,6) \\
        + & (+,5) & (+,3) & (+,4) \\
    \end{bmatrix},
&S_2 = 
\begin{bmatrix}
         & - & - & - \\
        - & + & + & + \\
        + & (-,3) & (-,6) & (-,5) \\
        + & (-,6) & (-,4) & (-,3) \\
    \end{bmatrix},\\[4em]   
S_3 = 
\begin{bmatrix}
         & - & (+,1) & (-,2) \\
        (-,2) & + & + & + \\
        + & - & (+,5) & (+,6) \\
        + & (-,6) & + & - \\
    \end{bmatrix},
&S_4 =
\begin{bmatrix}
         & + & (-,2) & (+,1) \\
        (+,1) & + & + & + \\
        + & + & (-,6) & (-,5) \\
        + & (+,5) & - & + \\
    \end{bmatrix},\\[4em]
S_5 = 
\begin{bmatrix}
         & + & (+,1) & (-,2) \\
        (+,1) & + & + & + \\
        + & (+,4) & + & - \\
        + & + & (-,4) & (-,3) \\
    \end{bmatrix},
&S_6 = 
\begin{bmatrix}
         & - & (-,2) & (+,1) \\
        (-,2) & + & + & + \\
        + & (-,3) & - & + \\
        + & - & (+,3) & (+,4) \\
    \end{bmatrix}.
\end{array}$}
\end{equation}

The contradiction contexts for states $S_1$ and $S_2$ are exactly the negative contexts of the set of all two-qubit observables in Fig.~\ref{fig:extendedPM-magicset}. For states $S_3$ and $S_4$, they are the contexts 
$\{\chi_{02},\chi_{10},\chi_{12}\}$, $\{\chi_{03},\chi_{20},\chi_{23}\}$, $\{\chi_{31},\chi_{13},\chi_{22}\}$. 
For the remaining two states they are the contexts 
$\{\chi_{02},\chi_{30},\chi_{32}\}$, $\{\chi_{10},\chi_{03},\chi_{13}\}$, $\{\chi_{12},\chi_{21},\chi_{33}\}$.

We observe that the Mealy machine solution for the Peres-Mermin square \eqref{eq:square-4state} has a set of simple points (i.e., the corresponding vertices are simple for each state) that are pairwise-incompatible and with consistent outputs (the diagonal with all $+$ output in each state). In the same way, the solution for the set of all two-qubit observables \eqref{eq:doily-6state} has a set of five pairwise-incompatible points that remain simple and with output $+$ in all states, in our example that is the set $\{\chi_{11}, \chi_{12}, \chi_{13}, \chi_{20}, \chi_{30}\}$.

In fact, once this condition is enforced, and the set of three disjoint contradiction contexts has been chosen, there are only two possible ways to complete the output function for any given state. This yields six possible states, for which there is a single possible transition function $\Upsilon$ (note that this assumes that nonsimple vertices map to simple vertices: a fact whose proof we have ommitted since it is somewhat lengthy and since this fact was not used in any of the arguments for optimality). In other words, the six states are completely defined by enforcing that the pairwise-incompatible points are simple and have output $1$, along with the choice of contradiction contexts. 


Now, we prove that the set of all two-qubit observables needs more memory to satisfy predictions (Ia) and (II) than the Peres-Mermin square.

\begin{proposition}
    The set of all two-qubit observables needs at least four states to satisfy predictions (Ia) and (II).
\end{proposition}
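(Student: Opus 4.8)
The plan is to suppose, for contradiction, that there is a Mealy machine $\mathcal{M}$ with at most three states satisfying predictions (Ia) and (II) for the set of all two-qubit observables, and to play the sink structure of the digraphs $\mathcal{D}_b$ against the nonsimple-vertex count of Proposition~\ref{prop:doily-nonsimple}. One and two states are dispatched immediately: with one state every vertex is simple, contradicting Proposition~\ref{prop:doily-nonsimple}; with two states, a contradiction context $b$ of a state $S_1$ would, by Proposition~\ref{prop:Db-2sink}, force two distinct strongly connected sinks of $\mathcal{D}_b$ reachable from $S_1$ and not containing $S_1$, but by Proposition~\ref{prop:sink-entire-state} each such sink is a nonempty union of entire states drawn from $\{S_2\}$, so both equal $\{S_2\}$ --- impossible. (Alternatively, the Peres-Mermin square is a subconfiguration and already requires three states for (Ia) and (II), as recalled in Sec.~\ref{stateoftheart}.) So it remains to rule out exactly three states $S_1,S_2,S_3$.

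First I would pin down the sink structure. If $b$ is a contradiction context of $S_i$, Proposition~\ref{prop:Db-2sink} gives at least two distinct strongly connected sinks of $\mathcal{D}_b$ reachable from $S_i$, none containing $S_i$; by Proposition~\ref{prop:sink-entire-state} these are nonempty unions of entire states drawn from the two remaining states, hence must be exactly the two simple sinks $\{S_j\}$ and $\{S_k\}$, and Proposition~\ref{prop:Db-state-in-sink} then forces $(S_j,b)$ and $(S_k,b)$ to be \emph{simple} contexts. The consequence I would extract is: \emph{if $(S_i,p)$ is a nonsimple vertex then $p$ lies on no contradiction context of $S_j$ for $j\neq i$}. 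Writing $N_i$ for the set of points $p$ with $(S_i,p)$ nonsimple and $P_i$ for the union of the points lying on contradiction contexts of $S_i$, this says $N_i\cap P_j=\emptyset$ for all $j\neq i$.

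Then I would combine this disjointness with cardinalities. Since the contextuality degree of the set is $3$, each state has at least three contradiction contexts, so $P_i$ contains one of the minimal configurations of Fig.~\ref{fig:doily-contras}; hence $|P_i|\geq 9$ always, and $|P_i|\geq 10$ whenever $S_i$'s contradiction contexts avoid configurations (a) and (b). By Proposition~\ref{prop:doily-nonsimple}, $|N_i|\geq 5$; moreover, since each contradiction context carries two nonsimple vertices (Proposition~\ref{prop:Db-2nonsimple}) and a state whose contradiction contexts include a configuration (a) or (b) then needs six nonsimple vertices while a configuration (c) needs only five (Sec.~\ref{sec:doily-contras}), $|N_i|=5$ forces $S_i$ to avoid (a) and (b), hence $|P_i|\geq 10$. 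Now split into two exhaustive cases. If every $|N_i|\geq 6$: from $N_i\subseteq P\setminus(P_j\cup P_k)$ and $|P|=15$ one gets $|P_j\cup P_k|\leq 9$, which together with $|P_j|,|P_k|\geq 9$ forces $P_j=P_k$; doing this for each $i$ gives $P_1=P_2=P_3$. If instead some $|N_i|=5$: then $|P_i|\geq 10$, which via the same counting pushes $|N_j|=5$ and $|P_j|\geq 10$ for the other two states, and $P_1=P_2=P_3$ again follows. Either way all three point sets coincide, and the contradiction is immediate: a contradiction context of $S_1$ lies inside $P_1$ and, by Proposition~\ref{prop:Db-2nonsimple}, carries two nonsimple vertices of $S_1$, so $N_1\cap P_1\neq\emptyset$ --- contradicting $N_1\cap P_2=\emptyset$ together with $P_1=P_2$.

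The step I expect to be the main obstacle is the bookkeeping of the case split over the minimal configurations: one must use carefully that a state whose contradiction set meets (a) or (b) needs six nonsimple vertices (so a state with exactly five has a ten-point $P_i$), confirm that the two cases are genuinely exhaustive, and verify that each collapses to $P_1=P_2=P_3$. Once that is settled, the rest is a routine chaining of Propositions~\ref{prop:sink-entire-state}, \ref{prop:Db-state-in-sink}, \ref{prop:Db-2nonsimple}, \ref{prop:Db-2sink}, and \ref{prop:doily-nonsimple}.
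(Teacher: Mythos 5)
Your proof is correct, but it takes a genuinely different route from the paper's. The paper localizes the argument to a single one of the ten Peres--Mermin squares inside the doily: some state $S_1$ has at least three contradiction contexts in one square (by the classification of Fig.~\ref{fig:doily-contras}), each of $S_2,S_3$ has at least one contradiction context in that same square (a square cannot be noncontextually satisfied), all five blocks are distinct because a block can be a contradiction context for at most one of three states, and a short intersection count in the $3\times 3$ grid shows that one of $S_2$'s or $S_3$'s contradiction contexts meets the union of $S_1$'s three in at least two points; those two points are simple vertices by Propositions~\ref{prop:Db-2sink} and \ref{prop:Db-state-in-sink}, contradicting Proposition~\ref{prop:Db-2nonsimple}. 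You instead run a global counting argument over all $15$ points: the same two propositions give you the disjointness $N_i\cap P_j=\emptyset$ for $j\neq i$, and the cardinality bounds $|N_i|\geq 5$ (or $6$) and $|P_i|\geq 9$ (or $10$) force $P_1=P_2=P_3$, which collides with $N_1\cap P_1\neq\emptyset$. Both arguments rest on the same classification of minimal contradiction-context sets and the same structural lemmas; the paper's version is shorter and needs only the qualitative fact that every minimal set meets some square in three contexts, whereas yours needs the exact point counts of configurations (a), (b), (c) ($9$, $9$, $10$) and the $5$-versus-$6$ nonsimple-vertex dichotomy. Your case split is exhaustive and the arithmetic checks out, so the extra bookkeeping you worried about is sound; what the global version buys is a cleaner invariant ($P_1=P_2=P_3$) and a template that makes explicit exactly where the argument breaks for four states (the disjointness $N_i\cap P_j=\emptyset$ relies on there being only two states left to absorb the two sinks, which fails once a sink of order two becomes possible).
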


\begin{proof}
Assume towards a contradiction that there is a three-state Mealy machine with states $S_1,S_2,S_3$ that satisfies predictions (Ia) and (II). Each block $b \in B$ can form a contradiction context for at most one state (that is, for any distinct pair $i,j \in \{1,2,3\}$, $(S_i,b)$ and $(S_j,b)$ cannot both be contradiction contexts). One of the 10 squares must have, at least, three contradiction contexts $(S_1,b_1), (S_1, b_2), (S_1, b_3)$ for state $S_1$. Both $S_2$ and $S_3$ must also have at least one contradiction context $(S_2, b_4), (S_3,b_5)$ in the same square for distinct $b_4,b_5$. It is straightforward to check that either $|b_4 \cap (b_1 \cup b_2 \cup b_3)| \geq 2$ or $|b_5 \cap (b_1 \cup b_2 \cup b_3)| \geq 2$. Therefore, without loss of generality, the contradiction context $(S_2,b_4)$ has at least two simple vertices. This contradicts Proposition~\ref{prop:Db-2nonsimple}.
\end{proof}

Proving that the set of all two-qubit observables requires more than four states cannot be done by this sort of analysis and requires an examination of outputs as well. One potential avenue to exploit is the following obvious fact: for each of the (at least five) nonsimple vertices of each state, any context containing any of these nonsimple vertices cannot be simple. On the other hand, each contradiction context of a state implies at least one simple context in another state. 

In summary, the memory cost of satisfying (Ia) and (II) for the set of all two-qubit observables is at least $\log_2(4) = 2$ bits, and at most $\log_2(6)$ bits. Proving that the Mealy machine of \eqref{eq:doily-6state} is optimal would thus entail ruling out Mealy machines with four or five states.



\subsection{Memory cost of satisfying predictions (Ia), (Ib), and (II) for all two-qubit Pauli observables}


Here we take the exact same approach as in Sec.~\ref{sec:pentagram-iabii}, where we proved that no four-state Mealy machine could simulate predictions (Ia), (Ib), and (II) for Mermin's pentagram.

\begin{proposition} \label{prop:doily-1lonesink}
    No state can be in a simple sink of $\mathcal{D}_p$ for more than one point $p$.
\end{proposition}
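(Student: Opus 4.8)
The plan is to run the counting argument used for Mermin's pentagram in Proposition~\ref{prop:star-one-sink}, now with the point/block combinatorics of the doily. Assume towards a contradiction that a state $S$ lies in a simple sink of both $\mathcal{D}_p$ and $\mathcal{D}_{p'}$ for two distinct points $p \neq p'$. A simple sink has order $1$, so Proposition~\ref{prop:Dp-state-in-sink} shows that $(S,b)$ is a simple context for every block $b$ containing $p$ and for every block containing $p'$; hence the vertex $(S,q)$ is simple for every point $q$ that shares a block with $p$ or with $p'$ (or equals one of them). Writing $N[p]$ for the set of points sharing a block with $p$, together with $p$ itself, this says that $(S,q)$ is simple for every $q \in N[p]\cup N[p']$.

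The next, and main, step is to show $|N[p]\cup N[p']| \geq 11$ for any two distinct points $p,p'$. Using the description of the doily recalled in Sec.~\ref{sec:doily-contras} (points are the $15$ perfect matchings of $\{1,\dots,6\}$; each point lies on $3$ blocks and each block has $3$ points; two points share a block iff the corresponding matchings share a pair), a short inclusion--exclusion count -- any two of the three pairs of a matching already determine the third -- gives $|N[p]| = 1 + 3\cdot 2 = 7$ for every $p$. It then suffices to check $|N[p]\cap N[p']| \leq 3$. One distinguishes whether or not $p$ and $p'$ share a block: if they do, every common neighbour lies on that block since the doily contains no triangles (it is a generalized quadrangle), giving exactly $3$ common elements; if they do not, the union of the two matchings is a single $6$-cycle of $K_6$ and one checks directly that exactly $3$ other matchings meet both. (Equivalently this is the standard fact that in $\mathrm{GQ}(2,2)$ any two distinct points have exactly $t+1 = 3$ common neighbours.) Hence $|N[p]\cup N[p']| = 7 + 7 - 3 = 11$.

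Consequently all $11$ vertices $(S,q)$ with $q \in N[p]\cup N[p']$ are simple, so state $S$ has at most $15 - 11 = 4$ nonsimple vertices. This contradicts Proposition~\ref{prop:doily-nonsimple}, which requires every state to have at least five nonsimple vertices, and the proof is complete. The only genuine work is the overlap bound $|N[p]\cap N[p']| \leq 3$, and it is worth noting that the resulting count $15 - 11 = 4$ must lie strictly below the threshold $5$ of Proposition~\ref{prop:doily-nonsimple}, so this bound has to be exact rather than merely an inequality in the wrong direction; everything else is a direct translation of the pentagram argument.
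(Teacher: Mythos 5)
Your proof is correct and follows the same route as the paper's: assume a state sits in simple sinks for two distinct points, conclude that all vertices over $N[p]\cup N[p']$ are simple, and contradict the five-nonsimple-vertex requirement of Proposition~\ref{prop:doily-nonsimple}. The only difference is that you spell out the generalized-quadrangle count $|N[p]\cup N[p']| = 7+7-3 = 11$ (in both the collinear and noncollinear cases), which the paper simply asserts as ``we obtain $11$ simple vertices''; your verification of that count is accurate.
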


\begin{proof}
    Assume towards a contradiction that state $S_1$ is in a simple sink in $\mathcal{D}_p$ and in $\mathcal{D}_{p'}$ for distinct points $p,p' \in P.$ Then, for each point $r$ that is adjacent to $p$ or $p'$, the vertex $(S_1,r)$ is simple. No matter the choice of $p,p'$, we obtain $11$ simple vertices. This is a contradiction since each state must have at least $5$ nonsimple vertices (Proposition~\ref{prop:doily-nonsimple}).
\end{proof}

\begin{theorem}
    There is no $5$-state Mealy machine that satisfies predictions (Ia), (Ib), and (II) for the extended PM square.
\end{theorem}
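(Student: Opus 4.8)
The plan is to run the double-counting argument of Sec.~\ref{sec:pentagram-iabii} essentially verbatim, with five states in place of four and with Proposition~\ref{prop:doily-nonsimple} --- which guarantees \emph{five} nonsimple vertices, and hence five multi-sink points avoided, per state --- taking over the role played for the pentagram by the ``two nonsimple vertices'' count. So I would assume, towards a contradiction, that $\mathcal{M}$ is a five-state Mealy machine with states $S_1,\dots,S_5$ satisfying predictions (Ia), (Ib), and (II) for the extended Peres-Mermin square, with every commuting digraph built from $\mathcal{M}$, and then count the pairs $(S,p)$ such that $p$ is a multi-sink point and $S$ lies in no strongly connected sink of $\mathcal{D}_p$.

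For the lower bound: by Proposition~\ref{prop:doily-nonsimple}, each of the five states $S$ fails to lie in a strongly connected sink of $\mathcal{D}_p$ for at least five multi-sink points $p$. Summing over states, there are at least $5 \times 5 = 25$ such pairs.

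For the upper bound: fix a multi-sink point $p$, let $n(p)$ be the number of states lying in no strongly connected sink of $\mathcal{D}_p$, and let $s(p)$ be the number of simple (order-$1$) sinks of $\mathcal{D}_p$ (equivalently, the number of states lying in an order-$1$ sink of $\mathcal{D}_p$). If $\mathcal{D}_p$ has $k \ge 2$ strongly connected sinks, $s(p)$ of which are simple, then the sinks together contain at least $s(p)\cdot 1 + (k-s(p))\cdot 2 = 2k - s(p) \ge 4 - s(p)$ states, so $5 - n(p) \ge 4 - s(p)$, i.e. $n(p) \le 1 + s(p)$. Summing over the at most $15$ multi-sink points (the extended Peres-Mermin set has $15$ points) and invoking Proposition~\ref{prop:doily-1lonesink} --- each state lies in a simple sink for at most one point, so $\sum_p s(p) \le 5$ --- gives at most $15 + 5 = 20$ pairs. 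Since $20 < 25$, this contradicts the lower bound, which is the contradiction we seek.

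The step I expect to be the only real obstacle is verifying that the two constants feeding the inequalities are exactly as claimed, because the estimate is tight: the same computation with $6$ states in place of $5$ yields $n(p) \le 2 + s(p)$ and an upper bound of $15\cdot 2 + 6 = 36 \ge 30$, i.e.\ no contradiction, consistent with the six-state machine of Eq.~\eqref{eq:doily-6state} existing. Concretely, I would want to be sure that (i) Proposition~\ref{prop:doily-nonsimple} genuinely delivers the full value $5$ for \emph{every} state --- this is where the classification of the three minimal contradiction-context patterns of Fig.~\ref{fig:doily-contras} together with Propositions~\ref{prop:Db-2nonsimple} and \ref{prop:Dp-2sink} is used, the delicate point being that even the most economical pattern (five contradiction contexts, Fig.~\ref{fig:doily-contras}(c)) still forces five nonsimple vertices --- and (ii) the number of multi-sink points is capped by $15$. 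Everything else is the pentagram skeleton.
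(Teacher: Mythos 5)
Your proposal is correct and follows essentially the same double-counting argument as the paper: the lower bound of $25$ pairs comes from Proposition~\ref{prop:doily-nonsimple}, and the upper bound of $15+5=20$ comes from the $15$-point cap together with Proposition~\ref{prop:doily-1lonesink}, exactly as in the paper (which organizes the count via the quantities $x_i = |X_i|$ rather than your pointwise inequality $n(p)\le 1+s(p)$, but the content is identical). Both of the constants you flag as needing verification are indeed supplied by the paper: the second clause of Proposition~\ref{prop:doily-nonsimple} gives the value $5$ for every state, and the multi-sink points are trivially at most the $15$ points of the configuration.
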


\begin{proof}
    Let $X_i$ be the set of multi-sink points $p$ for which exactly $i$ states are not in any sink of $\mathcal{D}_p$, and let $x_i = |X_i|$. 
    Each state must not be in a sink for at least $5$ multi-sink points by Proposition~\ref{prop:doily-nonsimple}. Therefore,
    \begin{equation} \label{ineq:1}
        \sum_{i=1}^{3} ix_i \geq 25.
    \end{equation}

    For each multi-sink point $p \in X_2$, there is at least one state in a simple sink of $\mathcal{D}_p$, and for each multi-sink point $p' \in X_3$, there are exactly two states in simple sinks of $\mathcal{D}_{p'}$. Therefore, 
    \begin{equation} \label{ineq:2}
        x_2 + 2x_3 \leq 5.
    \end{equation}
    In total, the set of all two-qubit observables has $15$ points, and so we also have the inequality
    \begin{equation} \label{ineq:3}
        x_0 + x_1 + x_2 + x_3 \leq 15.
    \end{equation}
    Summing inequalities~\eqref{ineq:2} and \eqref{ineq:3}, we find that 
    \begin{equation} \label{ineq:4}
        x_0 + x_1 + 2x_2 + 3x_3 \leq 20.
    \end{equation}
    Clearly, inequalities~\eqref{ineq:1} and \eqref{ineq:4} are incompatible since $x_i \geq 0$ for each $i=0,1,2,3$.
\end{proof}

\section{Conclusion and open questions}
\label{conc}


We have studied the memory cost of classically simulating two different subsets of deterministic predictions [what we have called  predictions (Ia) and (II), and predictions (Ia), (Ib), and (II)] of quantum theory. 

The memory cost of such predictions was computed for the Peres-Mermin square in \cite{Kleinmann:2011NJP}. To the best of the authors' knowledge, these calculations represent the only cases in the literature where the memory cost of deterministic predictions was computed exactly.

In this article, we have computed the exact memory cost for both of these subsets of predictions for another fundamental set: Mermin's pentagram. We found that, for predictions (Ia) and (II), the memory cost is $\log_2(4) =2$ bits, and that, for predictions (Ia), (Ib), and (II), the memory cost is $\log_2(5)  \approx 2.32$ bits.

Moreover, we improved the bounds for yet another fundamental set: the set of all 15 two-qubit Pauli observables. We showed that the memory cost of satisfying predictions (Ia) and (II) is between $\log_2(4) = 2$ and $\log_2(6) \approx 2.58$ bits. We also showed that the memory cost of satisfying predictions (Ia), (Ib), and (II) is, at least, $\log_2(6) \approx 2.58$ bits.


Several questions remain open:

While the memory cost of classically simulating all the {\em probabilistic} predictions of quantum theory is known for the Peres-Mermin square \cite{Cabello:2018PRLEPSILON} and can be computed for other observables using the tools in \cite{Cabello:2018PRLEPSILON} (e.g., for Mermin's pentagram the memory cost is $\approx 7.09$ bits), the exact memory cost of simulating all {\em deterministic} predictions of quantum theory is not known for any set producing state-independent contextuality. 

We still do not know the exact memory cost of simulating any subset of deterministic predictions for the set of all 15 two-qubit Pauli observables. What is this value for simulating predictions (Ia) and (II)? How about simulating predictions (Ia), (Ib), and (II)? For this purpose, the tools introduced in \cite{Vieira:2022Q} might be useful.
It would be interesting to obtain bounds on the classical memory cost of satisfying both subsets of quantum predictions. Perhaps as a function of the number of qubits, or some property of its underlying configuration. It would also be interesting to quantify the gap between the memory cost in these scenarios in comparison to that of simulating all deterministic predictions of quantum theory.

In each of the optimal Mealy machines (for both subsets of predictions considered), it appears that there is a maximal set of pairwise incompatible points (independent set) that are simple and each have the same output for each state. An interesting question is whether it is always possible to construct an optimal machine satisfying these conditions.

Given that the memory cost of classically simulating the deterministic predictions of quantum theory for the sets of $n$-qubit Pauli observables grows as $O(n^2)$ \cite{Harrysson:2016}, it would be interesting to explore the memory cost for other sets (even dropping the requirement of state independence) whose classical simulation is, in principle, much more difficult \cite{AmaralPRA2015}. 


%

\end{document}